\documentclass{article}

\usepackage{hyperref}

\usepackage{bbold}
\usepackage{color}
\usepackage{extarrows}
\usepackage[bbgreekl]{mathbbol}
\usepackage{microtype}
\usepackage{multicol}
\usepackage{multirow}
\usepackage{proof}
\usepackage{stmaryrd}
\usepackage[normalem]{ulem}
\usepackage{xspace}
\usepackage[all]{xy}\CompileMatrices
\usepackage{wrapfig}
\usepackage{subfigure}
\usepackage{prooftree}
\usepackage{multirow,bigdelim}
\usepackage{xargs}
\usepackage{cases}
\usepackage{amsthm}
\usepackage{authblk}

\DeclareSymbolFont{bbold}{U}{bbold}{m}{n}
\DeclareSymbolFontAlphabet{\mathbbold}{bbold}

\newtheorem{example}{Example}
\newtheorem{definition}{Definition}
\newtheorem{lemma}{Lemma}
\newtheorem{proposition}{Proposition}


\newcommand{\Nat}{\ensuremath{\mathbb{N}}}

\newcommand{\Bin}{\ensuremath{\mathbb{2}}}

\newcommand{\down}[1]{\ensuremath{\lfloor #1 \rfloor}}

\newcommand{\minp}[1]{\ensuremath{{}^\circ{#1}}}
\newcommand{\maxp}[1]{\ensuremath{#1^\circ}}

\newcommand{\preS}[1]{\ensuremath{{}^\bullet{#1}}}
\newcommand{\postS}[1]{\ensuremath{#1^\bullet}}

\newcommand{\BC}[1]{\textsc{bc}(#1)}
\newcommand{\bc}{\mathbb{C}}

\newcommand{\scelleq}{\leftrightarrow}

\newcommand{\co}[1]{\overline{#1}}


\definecolor{mygray}{RGB}{120,120,120}

\newcommand{\drawplace}{*[o]=<.9pc,.9pc>{\ }\drop\cir{}}
\newcommand{\drawmarkedplace}{*[o]=<.9pc,.9pc>{\bullet}\drop\cir{}}

\newcommand{\nameplaceright}[1]{\POS[]+<.9pc,0pc>\drop{\scriptstyle{#1}}}

\newcommand{\drawtrans}[1]{*=<1.5pc,.8pc>{\scriptstyle{#1}}\drop\frm{-}}



\newcommand{\Kl}[1]{\mathcal{K}\ell(#1)}
\newcommand{\D}{\mathcal{D}}

\newcommand{\reductionRule}[2]{{\prooftree{ #1}\justifies{ #2}\endprooftree}}

\newcommand{\termtype}[4]{#1 \xrightarrow{#3} #4}



\newcommandx{\bct}[3][1=m,2=o,3=\Theta]{\mathsf{C}({#3})} 
\newcommandx{\cst}{\mathsf{C}} 

\newcommandx{\cop}[2][1=m,2=T_m]{#1\triangleright #2} 

\newcommandx{\sort}[3][3=\emptyset]{\ensuremath{(#1,#2,#3)}}

\newcommand{\cseq}[1]{{\bf #1}}
\newcommandx{\tw}[1][1={a,b}]{\ensuremath{\mathsf{X}}^{#1}}
\newcommandx{\idN}[1][1=a]{{\ensuremath{\mathsf{I}}^{#1}}}

\newcommandx{\toT}[2][1=N,2=m]{\llparenthesis{#1,#2}\rrparenthesis}

\newcommand\twarr[2]{%
\mathrel{\mathop{\moverlay{\scriptstyle\xrightarrow{\,#1\,}\cr{\lower.2em\hbox{$\scriptstyle{}_{#2}$}}}}}}
\newcommand\twarrw[2]{%
\mathrel{\mathop{\moverlay{\scriptstyle\Longrightarrow\cr{\lower-.6em\hbox{$\scriptstyle{}_{#1}$}}
\cr{\lower.3em\hbox{$\scriptstyle{}_{#2}$}}}}}}

\newcommandx{\enc}[4][1=T,2={\cseq i},3={\cseq o},4=\delta]{\llbracket#1,#2,#3,#4\rrbracket}

\newcommand{\removedp}[1]{{#1}}

\newcommand{\conf}[1]{{\sf Conf}(#1)}

\newcommandx{\termsof}[1]{\llparenthesis{#1}\rrparenthesis}

\newcommand{\kleisli}[4]{\llbracket #1,#2\rrbracket^{#3}_{#4}}

\newcommand{\ket}[2]{#1 | #2\rangle}


\bibliographystyle{plain}

\begin{document}

\title{Unifying Inference for Bayesian and Petri Nets\thanks{The second author has been  partially supported by CONICET grant PIP 11220130100148CO.
The third author carried on part of the work while attending a Program on Logical
Structures in Computation at Simons Institute, Berkeley, 2016.}}        
\author[1]{Roberto Bruni}
\author[2]{Hern\'an Melgratti}
\author[1]{Ugo Montanari}
\affil[1]{Dipartimento di Informatica,  Universit\`a di Pisa, Italy}
\affil[2]{Departamento de Computaci\'on, Universidad de Buenos Aires - Conicet, Argentina}

\date{}
\maketitle


\begin{abstract}
Recent work by the authors equips Petri occurrence nets (PN) with probability distributions which fully replace nondeterminism. To avoid the so-called confusion problem, the construction imposes additional causal dependencies which restrict choices within certain subnets called structural branching cells (s-cells). 
Bayesian nets (BN) are usually structured as partial orders where nodes define conditional probability distributions. In the paper, we unify the two structures in terms of Symmetric Monoidal Categories (SMC), so that we can apply to PN ordinary analysis techniques developed for BN. Interestingly, it turns out that PN which cannot be SMC-decomposed are exactly s-cells. This result confirms the importance for Petri nets of both SMC and s-cells.
\end{abstract}


\section{Introduction}

At first sight, Bayesian nets (BN) and Petri Nets (PN) have very different purposes: efficient/intelligent analysis of probabilistic distributions for BN, a concurrent, nondeterministic model of computation for PN. But in fact BN and PN share a similar structure: 
a partial ordering representing incremental, local evolutions via concurrent firings for
PN, the introduction of new variables with independent, conditional probabilities for BN.

A closer comparison can be carried on when equipping also PN with a suitable probability structure. A recent approach~\cite{DBLP:journals/iandc/AbbesB06,BMM18} aims at fully replacing nondeterministic choices with probability distributions, while keeping concurrency expressiveness as much as possible. The problem here is the so-called {\em confusion}: in PN with confusion, a concurrent computation may exhibit non stable decision steps: delaying a choice may change the available options, due to the action of a concurrent transition. 

The simplest example of confusion is the Petri net in Fig.~\ref{fig:simplePN}.
Transitions $a$ and $b$ are enabled but in conflict, because they compete for the token in place $\removedp1$; transition $c$ is also enabled and concurrent w.r.t. $a$ and $b$; however the firing of transition $a$ enables the transition $d$ that is in conflict with $c$. 
As a consequence, the concurrent run where $a$ and $c$ are executed puts in the same equivalence class two quite different traces, where different decisions are taken: (1) if $a$ is executed first, then two choices are taken ($a$ over $b$ and $c$ over $d$); (2) if $c$ is executed first, then only one choice is taken ($a$ over $b$).
When choices are taken according to some probability distributions, this makes it impossible to assign a unique probability to the concurrent computation with $a$ and $c$.

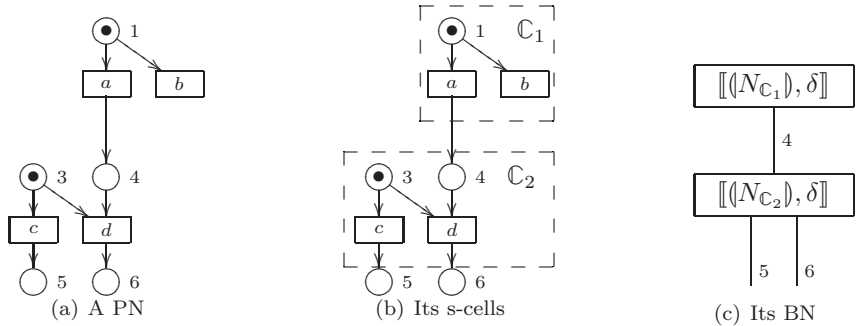
\begin{figure}
\begin{subfigure}[A PN\label{fig:simplePN}]{
          $$
          \xymatrix@R=.8pc@C=.8pc{
            &
            \drawmarkedplace\ar[d]\ar[rd]
            \nameplaceright {\removedp1}
            \\
            & \drawtrans a\ar[dd] 
            & \drawtrans b
            \\ \\
           \drawmarkedplace\ar[d]\ar[dr]
            \nameplaceright {\removedp3}
            &
            \drawplace\ar[d]
            \nameplaceright {\removedp4}
           \\
            \drawtrans c\ar[d] 
            &
            \drawtrans d\ar[d] 
            \\
            \drawplace
            \nameplaceright {\removedp5}
            &
            \drawplace
            \nameplaceright {\removedp6}
          }
          $$}
\end{subfigure}
        \qquad\qquad
\begin{subfigure}[Its s-cells\label{fig:simplecells}]{
          $$
          \xymatrix@R=.8pc@C=.8pc{
            &
            \drawmarkedplace\ar[d]\ar[rd]
            \nameplaceright {\removedp1}
            \POS[]+<1.1pc,-1pc> *+=<4.2pc,3.6pc>[F--]{}
            \POS[]+<2.5pc,0pc>\drop{\bc_1} 
            \\
            & \drawtrans a\ar[dd] 
            & \drawtrans b
            \\ \\
           \drawmarkedplace\ar[d]\ar[dr]
            \nameplaceright {\removedp3}
            \POS[]+<2.2pc,-1pc> *+=<6.6pc,3.6pc>[F--]{}
            \POS[]+<4.5pc,0pc>\drop{\bc_2} 
            &
            \drawplace\ar[d]
            \nameplaceright {\removedp4}
           \\
            \drawtrans c\ar[d] 
            &
            \drawtrans d\ar[d] 
            \\
            \drawplace
            \nameplaceright {\removedp5}
            &
            \drawplace
            \nameplaceright {\removedp6}
          }
          $$}
\end{subfigure}
        \qquad\qquad
\begin{subfigure}[Its BN\label{fig:simpleBN}]{
          $$
          \xymatrix@R=.8pc@C=.8pc{
          \\
          {\kleisli{\termsof{N_{\bc_1}}}{\delta}{}{}} \ar@{-}[dd]^{4} 
          \POS[]+<0pc,0pc> *+=<5pc,1.3pc>[F-]{}
          \\ \\
          {\kleisli{\termsof{N_{\bc_2}}}{\delta}{}{}}
          \ar@{-}^(.8){5}@<-2ex>[dd]\ar@{-}^(.8){6}@<2ex>[dd]
          \POS[]+<0pc,0pc> *+=<5pc,1.3pc>[F-]{}
          \\ \\
          {\ }
          }
          $$}
\end{subfigure}
\caption{A PN with confusion}\label{fig:simple}
\end{figure}

The solution proposed by the authors in~\cite{BMM18} is to translate the given PN into an equivalent confusionless net (ClPN). This is done by partitioning the net in \emph{structural branching cells} (\emph{s-cells}) where decisions must be resolved. S-cells are the equivalence classes of a preorder $\sqsubseteq$, that introduces some further causal dependencies. The preorder is obtained by closing transitively the relation including prime mutual exclusion and immediate causality. It follows that the preorder induces a partial order on s-cells, still denoted $\sqsubseteq$. In the example above there are two s-cells $\bc_1 \sqsubseteq \bc_2$, meaning that the choice between $a$ and $b$ must be resolved before the one between $c$ and $d$ (see Fig.~\ref{fig:simplecells}). S-cells can then be translated to a confusionless net, where the dependencies between s-cells are implemented by additional places in a way that corresponds to the execution strategy of~\cite{DBLP:journals/iandc/AbbesB06}.

To make confusionless a PN with confusion, it is necessary to delay non stable decisions until any two enabled transitions either do not share any precondition or they share all of them. Then such choice steps are equipped with probability distributions. In practice, our construction introduces a negation place $\co{p}$ for every place $p$ of the original net, and adds suitable controls to make sure that whenever place $\co{p}$  becomes inhabited, place $p$ is guaranteed never to become occupied. Thus when the present marking includes $\co{p}$, all transitions requiring $p$ can be erased and the net simplified. The process is hierarchical, because each s-cell can be further decomposed in smaller s-cells under the assumption that some place $\co{p}$  becomes inhabited.

The aim of this paper is to show that the partial order of s-cells induces a BN structure. The potential is to develop the countless applications of BN for inference and learning in the context of an expressive model like PN.
We propose a strong formal connection between PN and BN via Symmetric Monoidal Categories (SMC). 

On the side of BN, convenient categorical presentations have been recently proposed~\cite{DBLP:journals/entcs/JacobsZ16,DBLP:journals/corr/abs-1709-00322,DBLP:conf/fossacs/ClercDDG17} which, in the discrete model, represent BN as string diagrams of a SMC $\Kl\D$. Here, objects are natural numbers $n$ which express that $2^n$ cases are possible, and arrows are rectangular matrices, where rows assign probability distributions on the output cases for every input case. An arrow $f: X\rightarrow \D(Y)$ models a conditional probability distribution $P(Y | X)$. Concurrent arrows of string diagrams represent independent probability distributions. Usual inference analysis of BN, like forward and backward inference, bayesian inversion and disintegration can be made explicit as standard categorical constructions~\cite{DBLP:journals/corr/abs-1709-00322}.

A ClPN, and thus a PN, can also be mapped to an arrow of  $\Kl\D$, amenable to the same inference analysis techniques developed for BN. As for our translation PN-ClPN, this mapping is defined by well founded recursion on hierarchical branching cells. Here the effect of positive-negative information $p$/$\co{p}$ is played by associating object $1$ to a place (that is $2^1=2$ cases), which represents explicitly the two options.

Translating a ClPN into a BN is more difficult. In fact, an s-cell may produce several nodes of the BN, since the presence of negative information may break down the cell into a full BN. Thus while in $\Kl\D$ associativity of sequential composition takes care of the nested structure, in BN it will be necessary to introduce a nested version of BN, which, as far as we know, has not been proposed in the literature.

In Fig.~\ref{fig:simpleBN} we show the BN derived from the PN in Fig.~\ref{fig:simplePN}, represented as a string diagram.
There, $N_{\bc}$ is the subnet associated with the s-cell $\bc$ and $\delta$ is the family of probability distributions that rule the choices within $\bc_1$ (between $a$ and $b$) and $\bc_2$ (between $c$ and $d$ when place $\removedp4$ is marked, the trivial choice of $c$ when $\removedp4$ remains empty, i.e., they are conditional probabilitities depending on the presence/absence of tokens in $\removedp4$).
Roughly, there is one node for each s-cell and wires are associated with places.
The first node represents a variable that may take values $4$/$\co{4}$, i.e., it is the arrow
$$
\begin{array}{|c|c|c|}
\hline
& \emptyset & \{4\}
\\
\hline
\emptyset & p_b & p_a 
\\
\hline
\end{array}
: 0 \to 1
$$

\noindent
where the probabilities $p_a$ and $p_b = 1 - p_a$ are of course determined by $\delta$.
The second node represents a variable that may take all combination of values $5$/$\co{5}$ and $6$/$\co{6}$, conditioned to the value of the first variable, i.e., it is the arrow
$$
\begin{array}{|c|c|c|c|c|}
\hline
& \emptyset & \{5\} & \{6\} & \{5,6\}
\\
\hline
\emptyset & 0 & 1 & 0 & 0 
\\
\hline
\{4\} & 0 & p_c & p_d & 0
\\
\hline
\end{array}
: 1 \to 2
$$

\noindent
where, again, the values $p_c$ and $p_d = 1 - p_c$ are drawn by $\delta$.
For instance, $p_c$ is the conditional probability that the place $\removedp5$ is marked given that the place $\removedp4$ is marked.

To define the arrow in $\Kl\D$ that corresponds to a PN we exploit the monoidal category structure of nets and $\Kl\D$: first each $N$ net is uniquely decomposed in a term $\termsof{N}$ of an algebra whose constants are no further hierarchically decomposable s-cells, then the homomorphism $\kleisli{\termsof{N}}{\delta}{}{}$ returns the arrows in $\Kl\D$.

It is interesting to compare the ClPN and the $\Kl\D$ arrow for the same PN. The former model is much more informative in terms of concurrency and causality (see~\cite{DBLP:journals/corr/abs-1802-03726} for an event structure theory of persistent nets), while the latter is more straightforward in terms of structure and execution mode. It could be considered a fair algorithmic description of the execution style of~\cite{DBLP:journals/iandc/AbbesB06,BMM18} original model.

\paragraph{Structure of the paper}
In Section~\ref{sec:back} we fix the notation, recall the basics of Petri nets and occurrence nets and explain the notion of s-cell from~\cite{BMM18}.
In Section~\ref{sec:decompose} we provide a novel alternative characterisation of (the pre-oreder induced by) s-cells based on straightforward notion of parallel and sequential (de)composition of nets. This result further justifies the notion of s-cell as basic building block for occurrence nets.
In Section~\ref{sec:PNtoBN} we define the mapping from PN to BN. To this aim, an intermediate term algebra is used that builds on the decomposition defined in Section~\ref{sec:decompose} to break s-cells with non-empty initial interface into the hierarchical composition of other terms. Here some sort of case analysis is done: for each marking that can be provided to the s-cell we explore how it can be simplified (the absence of tokens allows for the removal of places and transitions). 
In Section~\ref{sec:inference} we show how the Bayesian structure can be exploited to reason about the marking of places of the original PN.
Finally, in Section~\ref{sec:conc} we draw some concluding remarks and give pointers to related and future work.

In~\ref{sec:appe} we show the correspondence between PN decomposition and the approach by Abbes and Benveniste based on event structures, which justifies the assignment of probability distributions to s-cells.

We assume the reader is familiar with some basic concepts from Bayesian networks and category theory.


\section{Background}\label{sec:back}

%
%
%
%

\subsection{Notation}\label{sec:notation}

We let $\Nat$ be the set of natural numbers
and $\Bin = \{0,1\}$.  
We write $U^S$ for the set of functions from $S$ to $U$:  
hence a subset of $S$ is an element of $\Bin^S$, 
and a multiset $m$ over $S$ is an element of $\Nat^S$.
A set can be seen as a multiset whose elements have unary multiplicity.  
Membership, union, difference and inclusion over sets and multisets are denoted by the
(overloaded) symbols: $\in$, $\cup$, $\setminus$ and $\subseteq$, respectively.  

Given a relation $R \subseteq S \times S$, we let 
$R^{-1}=\{(y,x) \mid (x,y)\in R\}$ be its inverse relation,
$R^+$ be its transitive closure and 
$R^*$ be its reflexive and transitive closure.  
We say that $R$ is \emph{acyclic} 
if $\forall s\in S.~ (s,s) \not\in R^+$.

\subsection{Petri Nets}\label{sec:petri}

\begin{definition}
A \emph{Petri net} $N$ is a tuple $(P,T,F)$ where: $P$ is the set of
places, $T$ is the set of transitions, and $F \subseteq (P\times
T)\cup (T\times P)$ is the \emph{flow relation}.  
\end{definition}

For $x\in P\cup T$, we denote by $\preS{x} = \{ y \mid (y,x)\in F\}$ and $\postS{x} =
\{ z \mid (x,z)\in F\}$ its \emph{pre-set} and \emph{post-set},
respectively.
We  assume that $P$ and $T$ are disjoint and non-empty
and that  $\preS{t}$ is non empty for every
$t\in T$.
We write $t: X \to Y$ for $t\in T$ with $X = \preS{t}$ and $Y =
\postS{t}$.
A \emph{marking} is a multiset $m\in\Nat^P$.
A marking denotes a state of a Petri net.
We say that the place $p\in P$ is \emph{marked} at $m$ if $p\in m$.
We write  $(N,m)$ for the net $N$ \emph{marked} by $m$.
In the following we write just $N$ for the marked net $(N,\emptyset)$.

Graphically, a Petri net is a directed bipartite graph whose
nodes are the places (circles) and transitions (rectangles) and whose arcs are the elements of $F$.
The marking $m$ is represented by inserting $m(p)$
tokens (bullets) in each place $p\in m$ (see Fig.~\ref{fig:N}).

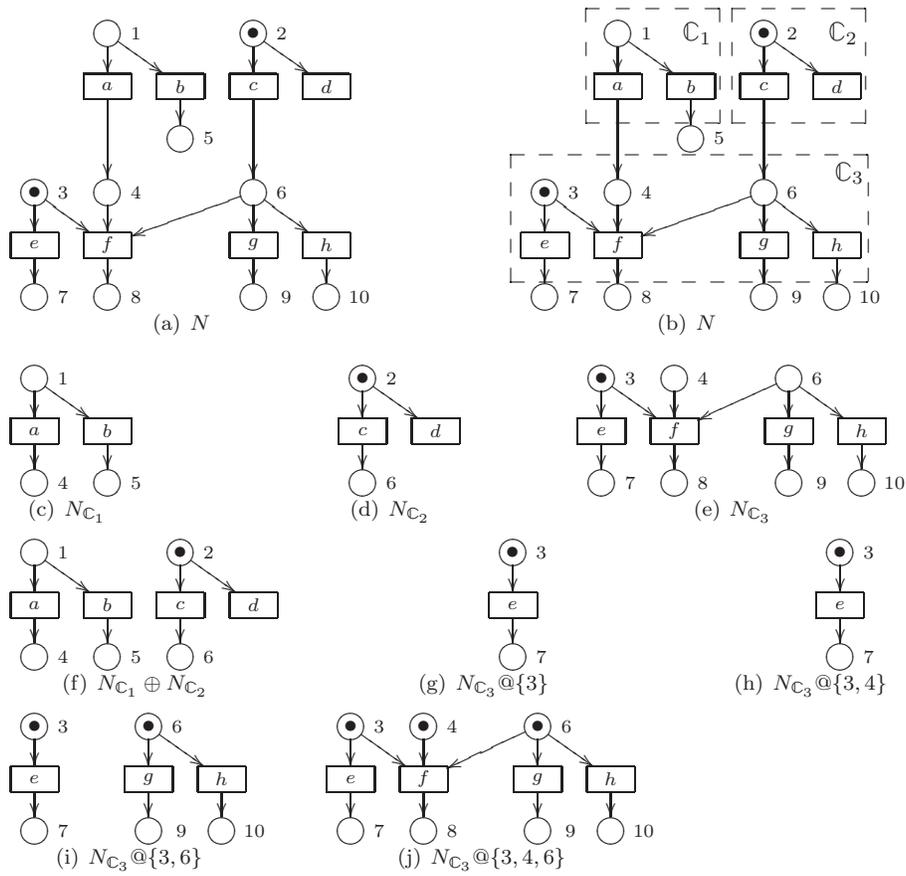
\begin{figure}[tp]
\begin{subfigure}[$N$\label{fig:N}]{
          $$
          \xymatrix@R=.8pc@C=.8pc{
            &
            \drawplace\ar[d]\ar[rd]
            \nameplaceright {\removedp1}
            & &
            \drawmarkedplace\ar[d]\ar[rd]
            \nameplaceright {\removedp2}
            \\
            & \drawtrans a\ar[dd] 
            & \drawtrans b\ar[d] 
            & \drawtrans c\ar[dd] 
            & \drawtrans d
            \\
            &
            &
            \drawplace
            \nameplaceright {\removedp5}
            &
            &
            \\
            \drawmarkedplace\ar[d]\ar[dr]
            \nameplaceright {\removedp3}
            &
            \drawplace\ar[d]
            \nameplaceright {\removedp4}
            &
            &
            \drawplace\ar[dll]\ar[d]\ar[dr]
            \nameplaceright {\removedp6}
            &
            \\
            \drawtrans e\ar[d] 
            &
            \drawtrans f\ar[d] 
            &
            &
            \drawtrans g\ar[d] 
            &
            \drawtrans h\ar[d] 
            \\
            \drawplace
            \nameplaceright {\removedp7}
            &
            \drawplace
            \nameplaceright {\removedp8}
            &
            &
            \drawplace
            \nameplaceright{\ \removedp9}
            &
            \drawplace
            \nameplaceright{\ \removedp{10}}
          }
          $$}
        \end{subfigure}
        \qquad\qquad
\begin{subfigure}[$N$\label{fig:N-bc}]{
          $$
          \xymatrix@R=.8pc@C=.8pc{
            &
            \drawplace\ar[d]\ar[rd]
            \nameplaceright {\removedp1}
            \POS[]+<1.1pc,-1pc> *+=<4.2pc,3.6pc>[F--]{}
            \POS[]+<2.5pc,0pc>\drop{\bc_1} 
            & &
            \drawmarkedplace\ar[d]\ar[rd]
            \nameplaceright {\removedp2}
            \POS[]+<1.1pc,-1pc> *+=<4.2pc,3.6pc>[F--]{}
            \POS[]+<2.5pc,0pc>\drop{\bc_2} 
            \\
            & \drawtrans a\ar[dd] 
            & \drawtrans b\ar[d] 
            & \drawtrans c\ar[dd] 
            & \drawtrans d 
            \\
            &
            &
            \drawplace
            \nameplaceright {\removedp5}
            &
             \\
            \drawmarkedplace\ar[d]\ar[dr]
            \nameplaceright {\removedp3}
            &
            \drawplace\ar[d]
            \nameplaceright {\removedp4}
            \POS[]+<2.25pc,-.8pc> *+=<11.25pc,4pc>[F--]{}
            \POS[]+<7.3pc,.6pc>\drop{\bc_3} 
            &
            &
            \drawplace\ar[dll]\ar[d]\ar[dr]
            \nameplaceright {\removedp6}
            &
            \\
            \drawtrans e\ar[d] 
            &
            \drawtrans f\ar[d] 
            &
            &
            \drawtrans g\ar[d] 
            &
            \drawtrans h\ar[d] 
            \\
            \drawplace
            \nameplaceright {\removedp7}
            &
            \drawplace
            \nameplaceright {\removedp8}
            &
            &
            \drawplace
            \nameplaceright{\ \removedp9}
            &
            \drawplace
            \nameplaceright{\ \removedp{10}}
          }
          $$}
        \end{subfigure}
        
        \begin{subfigure}[$N_{\bc_1}$\label{fig:N-bc-1}]{
          $$
          \xymatrix@R=.8pc@C=.8pc{
            \drawplace\ar[d]\ar[rd]
            \nameplaceright {\removedp1}
            \\
            \drawtrans a\ar[d] 
            & \drawtrans b\ar[d] 
            \\
            \drawplace
            \nameplaceright {\removedp4}
            &
            \drawplace
            \nameplaceright {\removedp5}
          $$}}
        \end{subfigure}
\qquad\quad
        \begin{subfigure}[$N_{\bc_2}$\label{fig:N-bc-2}]{
          $$
          \xymatrix@R=.8pc@C=.8pc{
            \drawmarkedplace\ar[d]\ar[rd]
            \nameplaceright {\removedp2}
            \\
            \drawtrans c\ar[d] 
            & \drawtrans d
            \\
            \drawplace
            \nameplaceright {\removedp6}
            &
          }
          $$}
        \end{subfigure}
        \begin{subfigure}[$N_{\bc_3}$\label{fig:N-bc-3}]{
          $$
          \xymatrix@R=.8pc@C=.8pc{
            \drawmarkedplace\ar[d]\ar[dr]
            \nameplaceright {\removedp3}
            &
            \drawplace\ar[d]
            \nameplaceright {\removedp4}
            &
            &
            \drawplace\ar[dll]\ar[d]\ar[dr]
            \nameplaceright {\removedp6}
            &
            \\
            \drawtrans e\ar[d] 
            &
            \drawtrans f\ar[d] 
            &
            &
            \drawtrans g\ar[d] 
            &
            \drawtrans h\ar[d] 
            \\
            \drawplace
            \nameplaceright {\removedp7}
            &
            \drawplace
            \nameplaceright {\removedp8}
            &
            &
            \drawplace
            \nameplaceright{\ \removedp9}
            &
            \drawplace
            \nameplaceright{\ \removedp{10}}
          }
          $$}
        \end{subfigure}
        \qquad\quad
\begin{subfigure}[$N_{\bc_1}\oplus N_{\bc_2}$\label{fig:bc1plusbc2}]{
          $$
          \xymatrix@R=.8pc@C=.8pc{
            \drawplace\ar[d]\ar[rd]
            \nameplaceright {\removedp1}
            & &
            \drawmarkedplace\ar[d]\ar[rd]
            \nameplaceright {\removedp2}
            \\
            \drawtrans a\ar[d] 
            & \drawtrans b\ar[d] 
            & \drawtrans c\ar[d] 
            & \drawtrans d
            \\
            \drawplace
            \nameplaceright {\removedp4}
            &
            \drawplace
            \nameplaceright {\removedp5}
            &
            \drawplace
            \nameplaceright {\removedp6}
            &
          }
          $$}
        \end{subfigure}
\qquad
        \begin{subfigure}[$N_{\bc_3}@\{\removedp3\}$\label{fig:N-bc-3-at-p3}]{
          $$
          \xymatrix@R=.8pc@C=.8pc{
            &&
            &\drawmarkedplace\ar[d]
            \nameplaceright {\removedp3}
            \\
            &&&
            \drawtrans e\ar[d] 
            \\
            &&&
            \drawplace
            \nameplaceright {\removedp7}
            &&
          }
          $$}
        \end{subfigure}
        \qquad\quad
        \begin{subfigure}[$N_{\bc_3}@\{\removedp3,\removedp4\}$\label{fig:N-bc-3-at-p3p4}]{
          $$
          \xymatrix@R=.8pc@C=.8pc{
            &&
            \drawmarkedplace\ar[d]
            \nameplaceright {\removedp3}
            &
             \\
            &&
            \drawtrans e\ar[d] 
            \\
            &&
            \drawplace
            \nameplaceright {\removedp7}
          }
          $$}
        \end{subfigure}
\qquad\quad
        \begin{subfigure}[$N_{\bc_3}@\{\removedp3,\removedp6\}$\label{fig:N-bc-3-at-p3p6}]{
          $$
          \xymatrix@R=.8pc@C=.8pc{
            \drawmarkedplace\ar[d]
            \nameplaceright {\removedp3}
            &
            &
            \drawmarkedplace\ar[d]\ar[dr]
            \nameplaceright {\removedp6}
            &
            \\
            \drawtrans e\ar[d] 
            &
            &
            \drawtrans g\ar[d] 
            &
            \drawtrans h\ar[d] 
            \\
            \drawplace
            \nameplaceright {\removedp7}
            &
            &
            \drawplace
            \nameplaceright{\ \removedp9}
            &
            \drawplace
            \nameplaceright{\ \removedp{10}}
          }
          $$}
        \end{subfigure}
   \quad
           \begin{subfigure}[$N_{\bc_3}@\{\removedp3,\removedp4,\removedp6\}$\label{fig:N-bc-3-at-p3p4p6}]{
          $$
          \xymatrix@R=.8pc@C=.8pc{
            \drawmarkedplace\ar[d]\ar[dr]
            \nameplaceright {\removedp3}
            &
            \drawmarkedplace\ar[d]
            \nameplaceright {\removedp4}
            &
            &
            \drawmarkedplace\ar[dll]\ar[d]\ar[dr]
            \nameplaceright {\removedp6}
            &
            \\
            \drawtrans e\ar[d] 
            &
            \drawtrans f\ar[d] 
            &
            &
            \drawtrans g\ar[d] 
            &
            \drawtrans h\ar[d] 
            \\
            \drawplace
            \nameplaceright {\removedp7}
            &
            \drawplace
            \nameplaceright {\removedp8}
            &
            &
            \drawplace
            \nameplaceright{\ \removedp9}
            &
            \drawplace
            \nameplaceright{\ \removedp{10}}
          }
          $$}
        \end{subfigure}

\caption{A simple PN}
\end{figure}

The operational semantics of a Petri net is defined by events called firings.
A transition $t$ is \emph{enabled} at the marking $m$,
written $m\xrightarrow{t}$, if $\preS{t}\subseteq m$.  
The \emph{firing} of a transition $t$ enabled at $m$ is written 
$m \xrightarrow{t} m'$ with $m' = (m\setminus \preS{t}) \cup \postS{t}$.
A firing sequence $m \xrightarrow{t_1 \cdots t_n} m'$ from $m$ to $m'$ is a finite sequence of firings, sometimes abbreviated $m \rightarrow^* m'$.
Moreover,  it is \emph{maximal} if no transition is
enabled at $m'$.
%
We say that  $m'$ is
\emph{reachable} from $m$ if  $m \rightarrow^* m'$.  
The set of markings reachable from $m$ is written $[m\rangle$. 
A marked net $(N,m)$ is \emph{safe} if each $m'\in [m\rangle$ is a set.

In the rest of the paper we only consider safe nets.
More precisely we consider so-called occurrence nets.

\subsection{Occurrence nets}\label{sec:occ}

We say that a net $(P,T,F)$ is acyclic if its flow relation $F$ is so.
Given an acyclic net we let $\preceq=F^*$ be the (reflexive) \emph{causality} relation 
and say that two transitions $t_1$ and $t_2$
are in \emph{immediate conflict}, written $t_1 \#_0 t_2$ if $t_1 \neq
t_2 \;\wedge\; \preS{t_1} \cap \preS{t_2} \neq \emptyset$.  
The \emph{conflict relation}
$\#$ is defined by letting 
$x \# y$ if there are $t_1,t_2\in T$ such that
  $(t_1,x), (t_2,y) \in F^+$ and $t_1 \#_0 t_2$.

\begin{definition}[Occurrence Net]
A \emph{nondeterministic occurrence net} (or just \emph{occurrence net}) is an acyclic net
$\mathcal{O}=(P,T,F)$ such that:
\begin{enumerate}
\item there are no backward conflicts (i.e., $\forall p\in P.~ |\preS{p}|\leq 1$), and
\item there are no self-conflicts (i.e., $\forall t\in T.~ \neg(t\# t)$).
\end{enumerate}

An occurrence net is deterministic if it does not have forward conflicts
(i.e., $\forall p\in P.~|\postS{p}|\leq 1$).
\end{definition}

A place $p$ of an occurrence net $\mathcal{O}$ is called \emph{initial} if its pre-set is empty; it is called \emph{final} if its post-set is empty; it is called \emph{isolated} if it is both initial and final.
We denote by $\minp {\mathcal{O}}$ the set of its initial places and by $\maxp {\mathcal{O}}$ the set of its final places.
The net $N$ in Fig.~\ref{fig:N} is an occurrence net. The sets of its initial and final places respectively are $\minp {N} = \{\removedp1, \removedp2,\removedp3\}$
and $\maxp{N} = \{\removedp5, \removedp7,\removedp8,\removedp9,\removedp{10}\}$.

Typically it is left implicit that all the initial places of an occurrence net are marked. 
Here we need to distinguish the cases in which only some initial places are marked.

\begin{definition}[Marked Occurrence Net]
A \emph{marked occurrence net} $\mathcal{M} = (\mathcal{O},m)$ is an occurrence net $\mathcal{O}$ together with a subset $m$ of initial, non-isolated places.
\end{definition}

The idea is that:
\begin{itemize}
\item any initial place in $m$ is already marked (by one token);
\item any initial place not in $m$ can receive a token from the context.
\end{itemize}

Given a marked occurrence net $\mathcal{M} = (\mathcal{O},m)$, we denote by $\minp {\mathcal{M}}=\minp {\mathcal{O}} \setminus m$ the set of its initial (unmarked) places and by $\maxp {\mathcal{M}}= \maxp {\mathcal{O}}$ the set of its final places. For the marked occurrence net $(N,\{\removedp2, \removedp3\})$ in  Fig.~\ref{fig:N},
we have $\minp {({N},\{\removedp2, \removedp3\})}= \{\removedp1\}$ and  $\maxp {({N},\{\removedp2, \removedp3\})}= \maxp{N} = \{\removedp5, \removedp7, \removedp8,\removedp9,\removedp{10}\}$.


A \emph{deterministic 
nonsequential 
process} (or just \emph{process})~\cite{DBLP:journals/iandc/GoltzR83}
represents the equivalence class of all  firing sequences of a net that only differ  in the order in which concurrent firings are
executed.  
It is given as a mapping $\theta:\mathcal{D} \to N$ from a
\emph{deterministic occurrence net} $\mathcal{D}$ to $N$ (preserving pre- and post-sets).
%
%
The  firing sequences of a processes $\mathcal{D}$ are its maximal
firing sequences starting from the marking $\minp{\mathcal{D}}$.
A process of $N$ is
\emph{maximal} if its firing sequences
are maximal in $N$.

When $N$ is an acyclic safe net, the mapping
$\theta$ is just an injective graph homomorphism:
without loss of generality, we 
name the nodes in $\mathcal{D}$ as their images in $N$ and
let $\theta$ be the identity. 
%
%

\subsection{Structural Branching Cells}\label{sec:cells}

In~\cite{BMM18} we have proposed a solution for determining the smallest loci of decision within an acyclic finite net, called \emph{structural branching cells}: they are subnets where the decision of firing some transition is taken when it is guaranteed that no conflicting transition which is currently not enabled can become enabled in the future.

The construction in~\cite{BMM18} takes a (finite) occurrence net as input, which can be, e.g., the (truncated) unfolding of any safe net and returns a partial order of structural branching cells.

To each transition $t$ we assign a unique s-cell $[t]$.
This is achieved by taking the equivalence class of $t$ w.r.t. the equivalence relation $\scelleq$
induced by the least preorder $\sqsubseteq$ that
includes immediate conflict $\#_0$ and causality $\preceq$.
Formally, we let $\sqsubseteq$ be the transitive closure of the relation 
$\#_0\ \cup \preceq \cup\ \mathsf{Pre}^{-1}$, where  $\mathsf{Pre}= F \cap (P\times T)$.
This way, each s-cell $[t]$ also includes the places in the pre-sets of the transitions in $[t]$.
Since $\#_0$ is subsumed by the transitive closure of the relation $\preceq \cup\ \mathsf{Pre}^{-1}$,
we  equivalently set $\sqsubseteq\ = (\preceq \cup\ \mathsf{Pre}^{-1})^*$.

\begin{definition}[S-cells]\label{def:scells}
Let $N=(P,T,F)$ be a finite occurrence net and $\sqsubseteq$ defined as above.
Let $\scelleq\ = \{(x,y) \mid x\sqsubseteq y \wedge y \sqsubseteq x\}$.
The set $\BC N$ of \emph{s-cells} is 
the set of equivalence classes of $\scelleq$, i.e., $\BC N = \{[t]_{|\scelleq} \mid t\in T\}$.
\end{definition}

We let $\bc$ range over s-cells.  
It is immediate to note that s-cells are ordered by $\sqsubseteq$: we let $\bc\sqsubseteq \bc'$ if there are $t\in\bc,t'\in\bc'$ with $t \sqsubseteq t'$.

  For any s-cell $\bc$, we denote by $N_{\bc}$ the
  subnet of $N$ whose elements are in 
  $\bc \cup \bigcup_{t\in \bc} \postS{t}$, i.e., we include
  in $N_{\bc}$ also all places in the post-set of some transition in $\bc$.

Abusing the notation, we denote by $\minp \bc$ the set of all the
initial places in $N_{\bc}$ and by $\maxp \bc$ the set of all the
final places in $N_{\bc}$.
When the original net $(N,m)$ is marked we sometimes let its cells inherits the marking, i.e., we let the initial marking of $N_{\bc}$ be $m\cap \minp \bc$.

\begin{example} The net in Fig.~\ref{fig:N} has three s-cells, which are depicted in 
Fig.~\ref{fig:N-bc}: 
$\bc_1 = 
\{\removedp1, a, b\}$ concerning 
the choice between $a$ and $b$,  and 
$\bc_2 = 
\{\removedp2, c, d\}$
concerning the choice between $c$ and $d$, and 
$\bc_3 = 
\{\removedp3, \removedp4, \removedp6, e, f, g, h\}$.
 The nets  $N_{\bc_1}$,  $N_{\bc_2}$ and $N_{\bc_3}$ are respectively shown 
 in Fig.~\ref{fig:N-bc-1}, \ref{fig:N-bc-2} and~\ref{fig:N-bc-3}. 
 For $\bc_1$,  $\minp{\bc_1} = \minp{N_{\bc_1}} = \{\removedp1\}$ and $\maxp{\bc_1} = \maxp{(N_{\bc_1})} =  \{\removedp4, \removedp5\}$.
  For $\bc_2$,  $\minp{\bc_2} = \minp{N_{\bc_2}}\setminus\{\removedp2\} = \{\removedp2\}\setminus\{\removedp2\} = \emptyset$ and 
  $\maxp{\bc_2} = \maxp{(N_{\bc_2})} =  \{\removedp6\}$.

\end{example}

The behaviour of a branching cell is characterised in terms of all its possible executions.

\begin{definition}[Transactions]
  Let $\bc\in \BC N$ and $m=\minp{\bc}$.  Then, a
  \emph{transaction} $\theta$ of $\bc$, written $\theta:\bc$, is a maximal (deterministic) process
  of $(N_{\bc},m)$.
  We denote by $\Theta(\bc)$
  the set of all the transactions of $\bc$.
\end{definition}

Since the set of transitions in a transaction $\theta$ uniquely
determines the corresponding process in $N_{\bc}$, we write a
transaction $\theta$ simply as the set of its transitions.
If $i=\minp{\theta}$ is the set of initial places of $\theta$ and $o=\maxp{\theta}$ is the set of its final places, we write $\theta:i\to o$.
Note that in general, for $\theta:i \to o \in  \Theta(\bc)$, we have $i\subseteq \minp{\bc}$ and $o\subseteq \maxp{\bc}$.
We write ${\sf n}(\theta)$ for the set of transitions and places of $\theta$.

\begin{example} Consider the net $N_{\bc_3}$ in Fig.~\ref{fig:N-bc-3}. It has
 the following three transactions: $\theta_1 = \{f\}$, $\theta_2 = \{e, g\}$ and $\theta_3 = \{e, h\}$,
 with 
 $\theta_1:\{\removedp3, \removedp4, \removedp6\} \to \{\removedp8\}$
 $\theta_2:\{\removedp3, \removedp6\} \to \{\removedp7, \removedp9\}$
 $\theta_3:\{\removedp3, \removedp6\} \to \{\removedp7, \removedp{10}\}$.
\end{example}


\section{Petri Nets Decomposition}\label{sec:decompose}

We have already said that s-cells form a partial order.
Here we show that it can be seen as a particular commutative monoidal category structure. 

We proceed as follows:
\begin{enumerate}
\item we define set-theoretical parallel and sequential composition of nets;
\item we show that parallel and sequential composition, together with a suitable notion of identities, induce a commutative monoidal category structure over occurrence nets;
\item we show that s-cells are neither decomposable in parallel nor in series;
\item we show that each Petri net admits a unique maximal decomposition in terms of parallel and sequence (up to the axioms of commutative monoidal categories) and that such decomposition coincides with the partial order of s-cells.
\end{enumerate}

This provides a new characterisation of s-cells as the building blocks of occurrence nets that supports our intuition about their relevance.

Intuitively, parallel composition takes two nets and put them side by side.

\begin{definition}[Parallel composition]
Let $(P_1,T_1,F_1,m_1)$ and $(P_2,T_2,F_2,m_2)$ be two Petri nets whose nodes are disjoint (i.e., with $(P_1\cup T_1) \cap (P_2\cup T_2) = \emptyset$).
Their parallel composition is given by the element-wise union of their components:
$$(P_1,T_1,F_1,m_1) \oplus (P_2,T_2,F_2,m_2) = 
(P_1\cup P_2, T_1\cup T_2, F_1\cup F_2, m_1\cup m_2)$$
\end{definition}

Sequential composition is defined over (marked) occurrence nets only.

\begin{definition}[Sequential composition]
Let $\mathcal{M}_1=(\mathcal{O}_1,m_1)$ and $\mathcal{M}_2=(\mathcal{O}_2,m_2)$ be two marked occurrence nets, with $\mathcal{O}_j = (P_{j},T_j,F_j)$ for $j=1,2$, whose nodes are disjoint except for the final places of $\mathcal{M}_1$ that are identical to the unmarked initial places of $\mathcal{M}_2$ (i.e., with $\maxp{\mathcal{M}_1} = (P_1\cup T_1) \cap (P_2\cup T_2) = \minp{\mathcal{M}_2}$).
Their sequential composition is given by the element-wise union of their components (but note that the places in $\maxp{(\mathcal{M}_1} = \minp{\mathcal{M}_2}$ are shared):
$$(P_1,T_1,F_1,m_1) ; (P_2,T_2,F_2,m_2) = 
(P_1\cup P_2, T_1\cup T_2, F_1\cup F_2, m_1\cup m_2)$$
\end{definition}

Let us write $\mathcal{M}: i \to o$ for a marked occurrence net with $i = \minp{\mathcal{M}}$ and $o = \maxp{\mathcal{M}}$ 
Then we note that for $\mathcal{M}_j: i_j \to o_j$ for $j\in[1,4]$:
\begin{itemize}
\item $\mathcal{M}_1 \oplus \mathcal{M}_2: i_1\cup i_2 \to o_1\cup o_2$, when the parallel composition is defined;
\item $\mathcal{M}_1 ; \mathcal{M}_2: i_1 \to o_2$, when the sequential composition is defined;
\item parallel composition is commutative and associative and has
 the empty net $\mathbf{0}=(\emptyset,\emptyset,\emptyset,\emptyset):\emptyset \to \emptyset$ as neutral element, i.e. it forms a commutative monoid;
\item sequential composition is associative;
\item for each set of places $i$ the identity net $I_i = (i,\emptyset,\emptyset,\emptyset): i \to i$ consisting just of (unmarked) isolated places $i$ behaves as the identity w.r.t. composition;
\item the monoid of parallel composition is functorial: $I_\emptyset = \mathbf{0}$, $I_{i_1 \cup i_2} = I_{i_1} \oplus I_{i_2}$ and $(\mathcal{M}_1 ; \mathcal{M}_2) \oplus (\mathcal{M}_3;\mathcal{M}_4)= (\mathcal{M}_1 \oplus \mathcal{M}_3) ; (\mathcal{M}_2\oplus \mathcal{M}_4)$.
\end{itemize}

In the following, we assume $\oplus$ has higher precedence over $;$, e.g. we write $\mathcal{M}_1 \oplus \mathcal{M}_2 ; \mathcal{M}_3$ instead of $(\mathcal{M}_1 \oplus \mathcal{M}_2 ); \mathcal{M}_3$.

From the above we get that marked occurrence nets form the arrows of a commutative (strict) monoidal pre-category (it is not a monoidal category because parallel and sequential composition are defined on concrete nets and impose some disjointness requirements on their places and transitions).

\begin{example}
\label{ex:decomposition}
Consider the marked occurrence nets 
$N_{\bc_1}:\{\removedp1\}\to\{\removedp4,\removedp5\}$, 
$(N_{\bc_2},\{\removedp2\}):\emptyset\to\{\removedp6\}$, and  
$(N_{\bc_3},\{\removedp3\}): \{\removedp4,\removedp6\}\to\{\removedp7,\removedp8,\removedp9,\removedp{10}\}$ in  Fig.~\ref{fig:N-bc-1}, \ref{fig:N-bc-2} and~\ref{fig:N-bc-3}.
Note that the parallel composition of 
$N_{\bc_1}$ and $N_{\bc_2}$  
is defined because the nets neither share places nor transitions. 
The resulting net 
$N_{\bc_1}\oplus (N_{\bc_2},\{\removedp2\}):\{\removedp1\}\to\{\removedp4,\removedp5, \removedp6\}$ 
is shown in Fig~\ref{fig:bc1plusbc2}. 
We remark that neither 
$N_{\bc_1}\oplus (N_{\bc_3},\{\removedp3\})$ nor 
$(N_{\bc_2},\{\removedp2\})\oplus (N_{\bc_3},\{\removedp3\})$ are
defined because $N_{\bc_3}$ shares the place $\removedp4$ with  $N_{\bc_1}$ and the place $\removedp6$ with  $N_{\bc_2}$.
Similarly, note that none of the considered occurrence nets can be composed sequentially, because their
interfaces do not match. For instance, the final place $\removedp5$ of  
$N_{\bc_1}\oplus (N_{\bc_2},\{\removedp2\}):\{\removedp1\}\to\{\removedp4,\removedp5, \removedp6\}$ 
does not appear as an initial place of $(N_{\bc_3},\{\removedp3\}): \{\removedp4,\removedp6\}\to\{\removedp7,\removedp8,\removedp9,\removedp{10}\}$.
We can fix this mismatch by considering the net $I_{\{\removedp5\}}:\{\removedp5\}\to\{\removedp5\}$ and noting that
$(N_{\bc_3},\{\removedp3\}) \oplus I_{\{\removedp5\}}: \{\removedp4,\removedp6,\removedp5\}\to\{\removedp7,\removedp8,\removedp9,\removedp{10},\removedp5\}$ is well 
defined. Then, 
$$N_{\bc_1}\oplus (N_{\bc_2},\{\removedp2\}) ; (N_{\bc_3},\{\removedp3\}) \oplus I_{\{\removedp5\}}:\{\removedp1\}\to\{\removedp5,\removedp7,\removedp8,\removedp9,\removedp{10}\}$$
stands for the net $N$ in Fig.~\ref{fig:N}.

\end{example}

A marked occurrence net is called \emph{trivial} if it has no transitions.

We say a marked occurrence net $\mathcal{M}$ is decomposable in parallel if there exists two non-trivial marked occurrence nets  $\mathcal{M}_1$ and $\mathcal{M}_2$ such that 
$\mathcal{M} = \mathcal{M}_1 \oplus \mathcal{M}_2$.
Similarly, we say that it is decomposable in series if there exists two non-trivial marked occurrence nets  $\mathcal{M}_1$ and $\mathcal{M}_2$ such that 
$\mathcal{M} = \mathcal{M}_1 ; \mathcal{M}_2$.

\begin{lemma}
Any s-cell $N_{\bc}$ cannot be decomposed in series and in parallel.
\end{lemma}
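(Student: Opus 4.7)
My plan is to derive a contradiction in each case by showing that a nontrivial decomposition would break the $\scelleq$-equivalence of the transitions of $\bc$. The key observation is that $\sqsubseteq = (F \cup \mathsf{Pre}^{-1})^*$, so a $\sqsubseteq$-chain is a sequence of atomic steps each of which either moves forward along an $F$-arc or moves from a transition to one of its pre-places; notably, one cannot move from a place backward to any of its pre-transitions. In both cases I assume a nontrivial decomposition $N_\bc = \mathcal{M}_1 \oplus \mathcal{M}_2$ or $N_\bc = \mathcal{M}_1 \comp \mathcal{M}_2$, denote by $T_i$ the transitions of $\mathcal{M}_i$, pick $t_i \in T_i$ (which must exist and lie in $\bc$ since the $\mathcal{M}_i$ are non-trivial), and derive $t_2 \not\sqsubseteq t_1$, contradicting $t_1 \scelleq t_2$.

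For the parallel case, the node sets of $\mathcal{M}_1$ and $\mathcal{M}_2$ are disjoint by definition, so every arc of $N_\bc$ lies entirely within one of the two subnets. A straightforward induction on chain length shows that the $\sqsubseteq$-closure of $\{t_2\}$ stays inside $P_{\mathcal{M}_2} \cup T_2$, so $t_1$ is unreachable. For the sequential case, the shared interface $I = \maxp{\mathcal{M}_1} = \minp{\mathcal{M}_2}$ consists of places $p$ whose pre-set in the composed net satisfies $\preS{p} \subseteq T_1$ (since $p$ is initial in $\mathcal{M}_2$ and thus has empty pre-set there) and whose post-set satisfies $\postS{p} \subseteq T_2$ (since $p$ is final in $\mathcal{M}_1$ and thus has empty post-set there). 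The same inductive argument then shows that the $\sqsubseteq$-closure of $\{t_2\}$ stays inside $T_2 \cup P_{\mathcal{M}_2}$: from any $t \in T_2$, both the forward $F$-step and the $\mathsf{Pre}^{-1}$-step land in $P_{\mathcal{M}_2}$, and from any $p \in P_{\mathcal{M}_2}$ the only available move is forward along $F$, landing in $T_2$ (even for an interface $p$, thanks to $\postS{p} \subseteq T_2$).

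The main obstacle is the sequential case, and the key to unlocking it is recognising the asymmetry of $\sqsubseteq$: it admits backward steps from transitions to pre-places via $\mathsf{Pre}^{-1}$, but disallows the analogous move from places to pre-transitions. This asymmetry turns the interface into a one-way bridge from $T_1$ to $T_2$, so $\mathcal{M}_2$ is $\sqsubseteq$-closed within $N_\bc$, and no element of $T_1$ is reachable from $t_2$. Were $\sqsubseteq$ symmetric, interface places would allow one to travel back into $\mathcal{M}_1$ and the inductive invariant would collapse; the fact that it does not collapse is precisely what makes s-cells indecomposable.
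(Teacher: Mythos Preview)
Your argument is correct and is exactly the contrapositive the paper has in mind: you exhibit transitions $t_1,t_2$ in the two components and show that the $(F\cup\mathsf{Pre}^{-1})^*$-closure of $t_2$ cannot reach $t_1$, so the composite cannot be a single $\scelleq$-class. The paper compresses all of this into the sentence ``it is immediate to prove that the sequential/parallel composition of two non-trivial nets is not an s-cell''; you have simply unpacked that claim, correctly isolating the asymmetry of $\sqsubseteq$ (no backward step from a place to a pre-transition) as the reason the interface in the sequential case is a one-way bridge.
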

\begin{proof}
By contraposition, it is immediate to prove that the sequential/parallel composition of two non-trivial nets is not an s-cell.
\end{proof}

\begin{proposition}
\label{prop:decomposition}
Any marked occurrence net can be uniquely decomposed as the parallel and sequential composition of its s-cells (and identities), up to the axioms of commutative monoidal pre-categories.
\end{proposition}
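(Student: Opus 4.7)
The plan is to establish existence and uniqueness separately, both driven by the partial order $\sqsubseteq$ on $\BC{\mathcal{M}}$ from Section~\ref{sec:cells}.

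For existence, I would organise the s-cells of $\mathcal{M}$ into layers $L_0, L_1, \ldots, L_m$, where $L_0$ is the set of $\sqsubseteq$-minimal s-cells and $L_{k+1}$ collects those whose strict $\sqsubseteq$-predecessors all lie in $\bigcup_{j\le k} L_j$. The crucial preliminary observation is that two distinct s-cells in the same layer share neither transitions nor places: transitions because $\BC{\mathcal{M}}$ partitions $T$; pre-set places because any such sharing would yield an immediate conflict and merge the cells; and a place post to one and pre to another because that would force a strict $\sqsubseteq$-ordering between them. Hence the parallel composition $\bigoplus_{\bc \in L_k} N_\bc$ is well-defined for each $k$. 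I would then sequentially compose these layers, augmenting each with identity wires $I_A$ for places that must bypass the current layer --- namely those produced earlier (or initial in $\mathcal{M}$) and consumed only later (or still final), together with any isolated places of $\mathcal{M}$. A routine element-wise union of places, transitions, flow relation and marking then shows that the resulting expression equals $\mathcal{M}$.

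For uniqueness modulo the axioms of commutative monoidal pre-categories (commutativity and associativity of $\oplus$, associativity of $;$, identity/neutrality laws, and the interchange law), I would first invoke the previous lemma to conclude that any $;/\oplus$-decomposition of $\mathcal{M}$ must have s-cells and identity wires as its leaves; hence any two decompositions use exactly the same multiset of s-cell leaves, namely $\BC{\mathcal{M}}$. I would then normalise an arbitrary decomposition to the layered form above by repeated application of the interchange law to push independent s-cells into their earliest admissible layer, commutativity of $\oplus$ to reorder parallel components within a layer, and the identity manipulations $I_{A \cup B} = I_A \oplus I_B$ (together with neutrality) to redistribute identity wires across layers. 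The layered normal form is then determined up to permutation of parallel components by the combinatorics of $\sqsubseteq$, which yields the claimed uniqueness.

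The main obstacle I foresee is the uniqueness part, specifically the bookkeeping of identity wires and the formal rewriting of an arbitrary decomposition into the layered normal form using only the listed axioms. The existence side is essentially a direct construction once the layer-disjointness observation is in hand; the technical weight lies in showing that the interchange law, together with identity splitting and merging, is strong enough to reconcile any two decompositions, and in verifying that no further congruence beyond the stated axioms is needed.
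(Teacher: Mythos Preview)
Your proposal is correct and follows essentially the same approach as the paper: stratify the s-cells into layers along $\sqsubseteq$, take parallel compositions (padded with identities) within layers and sequential composition across them for existence, and for uniqueness argue that any decomposition must have exactly the s-cells as leaves (by the preceding lemma) with their relative order forced by the places they share. Your treatment is in fact more careful than the paper's sketch, particularly the explicit disjointness check for s-cells in the same layer and the normalisation argument via interchange and identity-splitting; the paper simply asserts that same s-cells plus same induced order implies coincidence.
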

\begin{proof}
For the existence, the partial order of s-cell (is unique and it) induces a decomposition of the net.
For instance this can be done by stratifying the s-cells in layers $L_1,...,L_n$ where each layer $L_j$ is the (largest) parallel composition of some identity $I_{s_j}$ with all s-cells whose predecessors are in layers $L_1,...,L_{j-1}$ and then taking their sequential composition $L_1;...;L_n$.

For uniqueness, suppose two different decompositions can be found, then they must have the same s-cells (because s-cells are not decomposable) ordered in the same way (because the ordering is induced by the places they share), hence they coincide.
\end{proof}

\begin{definition}[Canonical form]
Given a marked occurrence net  $\mathcal{M}$ we denote by $\mathsf{can}(\mathcal{M})$ its unique decomposition.
\end{definition}

\begin{example} 
\label{ex:canonical-of-N}
The canonical form of  $(N,\{\removedp2,\removedp3\})$ in Fig.~\ref{fig:N} is given by the decomposition below, already discussed in Example~\ref{ex:decomposition}:
$$N_{\bc_1}\oplus (N_{\bc_2},\{\removedp2\}); (N_{\bc_3},\{\removedp3\}) \oplus I_{\{\removedp5\}}:\{\removedp1\}\to\{\removedp5,\removedp7,\removedp8,\removedp9,\removedp{10}\}$$

\end{example}

\subsection{Place Removal}

Given a possibly marked s-cell $N_\bc:i\to o$ (with $i\neq \emptyset$), we are interested in studying what happens under the hypothesis that some tokens arrive in a subset of places $m\subseteq i$ while the places in $s=i\setminus m$ are guaranteed to stay empty (i.e., they are \emph{dead}).
In fact it can happen that the removal of the places in $s$ and of the transitions and places that causally depend on them\footnote{In such cases, all the transitions that depend on some place in $s$ cannot be fired and  the places in their post-set are also dead.} will allow to further decompose the s-cell.


We let $N_\bc \ominus s$ be the net obtained by removing all dead nodes as explained above.
Additionally, isolated places are also removed.
The cancellation of some transitions can break the equivalence class induced by $\sqsubseteq$, which explains why $N_\bc \ominus s$ is not necessarily an s-cell. Also note that some of the final places of $N_\bc$ can become dead and canceled.
The final dead places can be computed by taking $\maxp{N_\bc} \setminus \maxp{(N_\bc \ominus s)}$.
Thus in general we have $N_\bc \ominus s: i' \to o'$ for some $i' \subseteq i\setminus s$ and $o' \subseteq o$.
%
We write $N_\bc @ m$ for the marked net $(N_\bc\ominus s,\minp{(N_\bc\ominus s)}):\emptyset \to o'$,
where $N_\bc:i\to o$ and $s=i\setminus m$,
i.e., for the net $N_\bc \ominus s$ whose initial places are all marked.

To some extent the behaviour of an s-cell is  determined by considering its behaviour under all possible initial markings. Consequently we can further explore the behaviour of $N_\bc:i\to o$ by considering  $N_\bc @ m$ for all $m\subseteq i$. 


\begin{example} Consider the s-cell $(N_{\bc_3},\{\removedp3\}) : \{\removedp4, \removedp6\}\to\{\removedp7,\removedp8,\removedp9,\removedp{10}\}$ in Fig.~\ref{fig:N-bc-3}. 
The behaviour of $(N_{\bc_3},\{\removedp3\})$ can be explained by considering all the possible ways in which 
its initial places $\removedp4$ and $\removedp6$ can be marked: none of them is marked (i.e., $N_{\bc_3}@\{\removedp3\}$),
just one of them is marked (i.e., either $N_{\bc_3}@\{\removedp3,\removedp4\}$ or $N_{\bc_3}@\{\removedp3,\removedp6\}$), 
or both of them are marked (i.e.,  
$N_{\bc_3}@\{\removedp3,\removedp4,\removedp6\}$). 
Net $N_{\bc_3}@\{\removedp3\}$ depicted in Fig.~\ref{fig:N-bc-3-at-p3} is obtained by removing from $N_{\bc_3}$  the initial places 
$\removedp4$ and $\removedp6$, and all the elements that causally depends on them, i.e., the transitions 
$f$, $g$ and $h$ and the places $\removedp7$, $\removedp8$, $\removedp9$ and $\removedp{10}$.
The remaining nets are in Fig.~\ref{fig:N-bc-3-at-p3p4}-\ref{fig:N-bc-3-at-p3p4p6}.
It is worth noticing that in $N_{\bc_3}@\{\removedp3,\removedp4\}$ the place $\removedp4$ is also removed from $N_{\bc_3}\ominus \{\removedp6\}$ because, after removing the place $\removedp6$ and thus the transition $f$, the place $\removedp4$ remains isolated.
\end{example}


\section{Compiling nets}\label{sec:PNtoBN}
In this section we associate each finite occurrence net with an arrow in the Kleisli category $\Kl\D$ of discrete probability distributions. This is 
achieved in two steps. We first introduce a language for representing occurrence nets and show how the 
s-cell decomposition can be used to  associate each occurrence net with a particular term. Then, we 
map terms into arrows in  $\Kl\D$.

\subsection{Language of nets}

The decomposition of a net in branching cells can be described by terms generated by the following grammar, where $m,s$ are sets of places and $\Theta$ is a set of transactions:


\[ T :: =   I_s\ |\ \bot_s\ |\ T\oplus T\ |\ T;T\ |\ \bct\ |\ \sum_{m\subseteq s}\ \cop[m][T]\]  

Here the idea is that $\bct$ denotes a basic building block consisting of the set of transactions of an s-cell whose initial places are all marked. 
The case of an s-cell $\bc$ with a set of unmarked initial places $s$ is represented as the formal sum $ \sum_{m\subseteq s}\ \cop[m][T]$, where all the possibile ($2^{|s|}$) initial markings $m$ are considered, each paired with the encoding of $N_\bc @ m$.
The term $I_s$ denotes the identity net, consisting just of a set of unmarked places with no transitions (i.e., all places are initial and final).
The term $\bot_s$ denote a net with no initial places and no transitions, whose only final places are $s$ (i.e., the places $s$ are dead).
The terms $T\oplus T$ and $T;T$ denote respectively the composition in parallel and in series.
%
%
%

The terms of the algebra are taken up to the axioms of commutative monoidal (pre-)categories, where additionally we have $\bot_\emptyset = I_\emptyset$ and $\bot_{s_1\cup s_2} = \bot_{s_1} \oplus \bot_{s_2}$.

\subsubsection{Typing} 


Not all terms are valid though. We introduce a type system to discard ill-formed terms.
Our types are triples of the form $(i,s,o)$ where $i$  is the set of initial unmarked places, $s$ is the set of all places and transitions appearing in a term and $o$ is the set of final places. 

We write $T:\termtype{i}{m}{s}{o}$ for $T:(i,s,o)$. 
The typing rules are in Fig.~\ref{fig:types}. The rules for $I_s$ and $\bot_s$ are self-explanatory.
The rule for $\oplus$ states that a term is well-typed when its subterms are well-typed and do not 
share place nor transitions (i.e., $s\cap s' = \emptyset$). The case of 
sequential composition $T;T'$ additionally requires that the set of
final places of $T$ coincides with the set of the initial unmarked 
places of $T'$. The rule for $\sum_{m\subseteq i}\cop$ requires all subterms $T_m$ to have 
the same sets of initial and final places (respectively, $\emptyset$ and $o$), which captures 
the idea that a sum represents the execution of a s-cell under all possible markings. 
The rule for $\bct$ follows immediately.

\begin{figure}
\[
\begin{array}{c}
\reductionRule{\phantom{T':\termtype{i'}{m'}{s'}{o'}}}
{I_s:\termtype{s}{\emptyset}{s}{s}}
\qquad
\reductionRule{\phantom{T':\termtype{i'}{m'}{s'}{o'}}}
{\bot_s:\termtype{\emptyset}{\emptyset}{s}{s}}
\qquad
\reductionRule{T:\termtype{i}{m}{s}{o}\quad T':\termtype{i'}{m'}{s'}{o'}\quad s\cap s'=\emptyset} 
{ T\oplus T':\termtype{i\cup i'}{m\cup m'}{s\cup s'}{o\cup o'}} 
\\[30pt]
\reductionRule{T:\termtype{i}{m}{s}{m}\quad T':\termtype{m}{m'}{s'}{o}\quad s\cap s'=m} 
{ T; T':\termtype{i}{m\cup m'}{s\cup s'}{o}} 
\qquad
\reductionRule{\forall m\subseteq i.~ T_{m}:\termtype{\emptyset}{m\cup o}{s_m}{o} \quad s=\bigcup_{m\subseteq i}s_{m}}
{ \sum_{m\subseteq i}\ \cop:\termtype{i}{m}{s}{o}}
\\[30pt]
\reductionRule{o = \bigcup_{\theta\in \Theta} \maxp{\theta} \qquad s = \bigcup_{\theta\in \Theta} {\bf n}({\theta}) }
{ \bct:\termtype{\emptyset}{m}{s}{o}}
\end{array}
\]
\caption{Type system}\label{fig:types}
\end{figure}

%
%
%

\begin{lemma} If $T:\termtype{i}{m}{s}{o}$ then
$i\cup o\subseteq s$.
\end{lemma}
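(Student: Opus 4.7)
The plan is to proceed by induction on the structure of the typing derivation $T:\termtype{i}{m}{s}{o}$, since the type system is syntax-directed with one rule per term constructor.

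The three base cases are immediate. For $I_s$ we read off $i = o = s$, so $i \cup o = s \subseteq s$. For $\bot_s$ we have $i = \emptyset$ and $o = s$, so again $i \cup o = s \subseteq s$. For $\bct$ the side conditions of the rule give $s = \bigcup_{\theta\in\Theta} {\bf n}(\theta)$ and $o = \bigcup_{\theta\in\Theta} \maxp{\theta}$, so $o\subseteq s$ directly, while $i = \emptyset$.

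The inductive cases for $\oplus$ and $;$ are routine set-theoretic manipulations. For $T\oplus T'$, the inductive hypothesis gives $i\cup o\subseteq s$ and $i'\cup o'\subseteq s'$, which together yield $(i\cup i')\cup (o\cup o')\subseteq s\cup s'$. For $T; T'$, note the first premise has type $\termtype{i}{m}{s}{m}$, so the IH gives $i\cup m\subseteq s$; the second premise of type $\termtype{m}{m'}{s'}{o}$ yields $m\cup o\subseteq s'$ by IH; combining, $i\cup o\subseteq s\cup s'$.

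The genuine difficulty is the sum rule. The IH applied to each premise $T_m:\termtype{\emptyset}{m\cup o}{s_m}{o}$ delivers $o\subseteq s_m$ and hence $o\subseteq \bigcup_{m\subseteq i} s_m = s$, which handles the $o\subseteq s$ half. For $i\subseteq s$, however, the bare statement $i\cup o\subseteq s$ applied inductively to each $T_m$ is not enough, since the inner $i$ of $T_m$ is empty and carries no information about the outer $i$. The cleanest fix is to strengthen the induction hypothesis to the slightly richer invariant $i\cup m\cup o\subseteq s$ (that is, the marking component also lies inside $s$). This strengthening is trivially preserved by $I_s$, $\bot_s$, $\oplus$, and $;$ (with the same union arguments as above), and for $\bct$ it holds because every initial place of the s-cell appears in the pre-set of some transition in some transaction, hence in ${\bf n}(\theta)\subseteq s$. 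With the strengthened IH in place, applying it to the particular summand $T_i$ (the case $m = i$) gives $i\cup o\subseteq s_i\subseteq s$, closing the argument. I expect this sum case, and specifically the need to strengthen the invariant to track the marking, to be the only non-mechanical step in the proof.
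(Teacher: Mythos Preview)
The paper's own proof is the single line ``the proof is by rule induction,'' so your proposal already goes well beyond it, and you have correctly isolated the one non-routine case: in the sum rule each premise $T_m$ has empty first component, so the unstrengthened hypothesis $i\cup o\subseteq s$ says nothing about the outer index set $i$.

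Your repair, however, leans on structure the paper does not actually supply. You strengthen to $i\cup m\cup o\subseteq s$ by reading the judgement as a quadruple with a marking component $m$, but the paper is explicit that types are triples $(i,s,o)$ and that the second slot of the macro is discarded (``We write $T:\termtype{i}{m}{s}{o}$ for $T:(i,s,o)$''); in the $\bct$ rule and in the conclusion of the sum rule that slot is not even bound to anything coherent, so there is no well-defined $m$ to add to your invariant. Worse, once this phantom component is dropped the lemma is in fact false for arbitrary well-typed terms: with $i=\{p\}$ and $T_\emptyset=T_{\{p\}}=\bot_{\{q\}}$, the term $\sum_{m\subseteq\{p\}} m\triangleright T_m$ is well-typed with $s=o=\{q\}$ and $i\not\subseteq s$. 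So the gap you sensed is real, but it is a defect in the paper's typing rule (or in the generality claimed for the lemma), not something a stronger induction hypothesis can close; the statement holds for the terms produced by the net encoding, where the summand indexed by the full set $i$ genuinely mentions those places, but not for the class of all well-typed terms as the type system stands.
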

\begin{proof}
The proof is by rule induction.
\end{proof}

Typing is unique, as stated by the following result. 
\begin{lemma} If $T:\termtype{i}{m}{s}{o}$ and $T:\termtype{i'}{m'}{s'}{o'}$ then
$i=i'$, $o=o'$, $s=s'$.
\end{lemma}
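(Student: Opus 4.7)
The plan is to proceed by straightforward induction on the structure of the term $T$, exploiting the fact that the typing rules in Fig.~\ref{fig:types} are \emph{syntax-directed}: for each constructor (namely $I_s$, $\bot_s$, $\oplus$, $;$, the formal sum, and $\bct$) exactly one rule is applicable, so a typing derivation is determined by the term up to choice of the marking parameter $m$.

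In each base case the type components $i,s,o$ are read off the subscript annotations of the constructor: for $I_s$ the rule forces $i=o=s=s$; for $\bot_s$ it forces $i=\emptyset$, $s=o=s$; for $\bct$, once the set of transactions $\Theta$ is fixed the rule forces $i=\emptyset$, $s=\bigcup_{\theta\in\Theta}{\bf n}(\theta)$ and $o=\bigcup_{\theta\in\Theta}\maxp{\theta}$. For the inductive cases I invoke the induction hypothesis on the immediate subterms to get that their types agree, and then read off from the rule that the composite type is the union of the subterm types (for $\oplus$ and $;$), or is recovered from the explicit index set $i$ and the common type of the family $\{T_m\}_{m\subseteq i}$ (for the sum). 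Notice that the statement deliberately omits the marking component $m$ from the uniqueness claim: the sum rule carries the marking as a free parameter in its conclusion, so $m$ need not be uniquely determined by $T$.

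Since terms are taken up to the axioms of commutative monoidal pre-categories plus $\bot_\emptyset = I_\emptyset$ and $\bot_{s_1\cup s_2}=\bot_{s_1}\oplus\bot_{s_2}$, for the argument to make sense I must also check that each of these axioms is type-preserving in the components $i,s,o$. The monoidal axioms (associativity, symmetry of $\oplus$, unit laws, interchange/functoriality of $\oplus$ with respect to $;$) all equate terms built from the same subterms and the same auxiliary sets, so their types coincide by direct computation. The two auxiliary equations are checked by hand: $\bot_\emptyset$ and $I_\emptyset$ both have type $(\emptyset,\emptyset,\emptyset)$, and $\bot_{s_1\cup s_2}$ has type $(\emptyset,s_1\cup s_2,s_1\cup s_2)$, matching that of $\bot_{s_1}\oplus\bot_{s_2}$ (both halves having empty initial interface, so the side-condition $s_1\cap s_2=\emptyset$ for $\oplus$ is inherited from the parallel composition being well-formed in one of the two derivations, and then inherited for the other).

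The main subtlety is the sum case: one must be careful that the index set $i$ displayed in the binder $\sum_{m\subseteq i}$ is part of the syntax of the term (not an external parameter), so the premise $T_m:\termtype{\emptyset}{m\cup o}{s_m}{o}$ of the rule forces $o$ to coincide across any two derivations (since by the induction hypothesis each $T_m$ determines its own $o$ uniquely). The total-places component $s=\bigcup_{m\subseteq i} s_m$ is then uniquely determined as well. Once this case is handled, the remaining routine induction combined with the preservation check for the equational axioms yields $i=i'$, $o=o'$, $s=s'$ as required.
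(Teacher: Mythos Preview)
Your proposal is correct and follows the same approach as the paper, which simply states ``The proof is by rule induction''; since the typing rules are syntax-directed, this coincides with your structural induction on $T$. Your treatment is in fact more careful than the paper's: you explicitly verify that the equational axioms on terms (the commutative monoidal laws and the $\bot$-equations) preserve the type components $(i,s,o)$, a point the paper leaves implicit.
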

\begin{proof}
The proof is by rule induction.
\end{proof}

Hereafter we assume terms to be well-typed.

\subsection{From Nets to Terms}

In this section we introduce a mapping from occurrence nets to terms. 

\begin{definition}
   Let $\mathcal{M}$ be a marked occurrence net.
   The corresponding term $\termsof{\mathcal{M}}$ is given by the homomorphic extension (w.r.t. identitites, parallel and sequential composition)\footnote{This just means that 
$\termsof{I_s} = I_s$, 
$\termsof{\mathcal{M}_1\oplus \mathcal{M}_2} =  
\termsof{\mathcal{M}_1} \oplus \termsof{\mathcal{M}_2}$ and
$\termsof{\mathcal{M}_1; \mathcal{M}_2} =  
\termsof{\mathcal{M}_1} ; \termsof{\mathcal{M}_2}$.} of the encoding defined below over s-cells.  

       \begin{subnumcases}{\termsof{N_{\bc},i}}
           \bct[m][o][\Theta(N_{\bc})] & if $\minp{N_{\bc}} = i$\hfill{\ } \label{case:toT-bc-base}\\
           \displaystyle\sum_{m\subseteq \minp{(N_{\bc},i)}} \cop[m][(\bot_{d_{m}} \oplus T_{m} )] & otherwise \label{case:toT-bc-decomp}
      \end{subnumcases}                 

      \hspace{2cm} 
      {where:}
            $\left\{
               \begin{array}{l@{\ }l}
               N_{m} &= N_{\bc}@i\cup m\\
               T_{m} &= \  \termsof{\mathsf{can}(N_m)}\\
               d_{m} &= \  \maxp{N_{\bc}} \setminus \maxp{N_m}\\
               \end{array}\right.$

\end{definition}

The encoding of a marked s-cell $\bc$ considers two cases: (i) all initial places of the s-cell are marked (Eq.~\ref{case:toT-bc-base}); and 
(ii) some initial tokens are unmarked. In the first case, a completely marked s-cell is mapped to the term $\bct[m][o][\Theta(N_{\bc})]$ that
describes all the possible executions of $N_{\bc}$, i.e., its transactions. Differently, when some initial places are unmarked, 
the corresponding term is obtained by composing the behaviour of the s-cell under each possible marking ${m\subseteq \minp{(N_{\bc},i)}}$.
The term $\cop[m][(\bot_{d_{m}} \oplus T_{m})]$ describes the behaviour of $\bc$ when all places 
in $i\cup m$ are marked and the remaining initial places are dead. For this reason, $\bot_{d_{m}}$  and $T_{m}$
are defined in terms of the net $N_{m} = N_{\bc}@i\cup m$. The term $\bot_{d_{m}}$ stands for the 
final places that are dead when the initial marking is $i\cup m$. 
The term $T_m$ encodes the net $N_{\bc}@i\cup m$: we just remark
here, as already mentioned, that we need to compute the canonical form of $N_m$, because removing elements from $\bc$ may 
originate a complex net an not an s-cell (as for $N_{\bc_3}@\{\removedp3,\removedp6\}$ in Fig.~\ref{fig:N-bc-3-at-p3p6}).

\begin{lemma}   For any finite occurrence net $N$  and marking $m\subseteq \minp N$,
$\toT$ is defined, unique (up-to the structure of commutative monoidal pre-categories) and well-typed. 
\end{lemma}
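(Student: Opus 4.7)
The plan is to prove the three properties---definedness, uniqueness up to the axioms of commutative monoidal pre-categories, and well-typedness---simultaneously, by well-founded induction on the lexicographic pair $\langle |T_N|,\,|\minp{(N_\bc,i)}|\rangle$, where $|T_N|$ is the number of transitions of the (sub)net currently being processed and the second component is the number of unmarked initial places of the s-cell currently under inspection.

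First I would reduce to the s-cell case. By Proposition~\ref{prop:decomposition}, $(N,m)$ admits a canonical decomposition $\mathsf{can}(N,m)$ that is unique up to the axioms, built from identities, marked s-cells, $\oplus$ and $;$. Since $\termsof{-}$ is defined homomorphically and the typing rules for $I_s$, $\oplus$ and $;$ fix the composite type from its parts, it is enough to verify the three properties for each marked s-cell $(N_\bc,i)$ in the decomposition and to check that the emitted interfaces agree with those dictated by $\mathsf{can}$; the latter follows by tracking $\minp{-}$ and $\maxp{-}$ through the two cases below.

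For case~\eqref{case:toT-bc-base} the encoding is the constant $\bct[i][\maxp{N_\bc}][\Theta(N_\bc)]$: no recursion is triggered and the $\bct$ typing rule directly supplies a type whose initial and final interfaces coincide with those of $N_\bc$. In case~\eqref{case:toT-bc-decomp}, set $j = \minp{(N_\bc,i)} \neq \emptyset$. For each $m \subseteq j$, the recursive call $T_m = \termsof{\mathsf{can}(N_m)}$ with $N_m = N_\bc @ (i \cup m)$ strictly decreases the lex measure: when $m \subsetneq j$, the set $j\setminus m$ is nonempty and every $p$ in it has a nonempty post-set in $N_\bc$ (since every place of $N_\bc$ lies in the pre- or post-set of some transition of $\bc$ by construction), so removing $p$ deletes at least one transition, and every s-cell in $\mathsf{can}(N_m)$ therefore has strictly fewer transitions than $N_\bc$; when $m = j$, we have $i \cup m = \minp{N_\bc}$, so $N_m$ is $N_\bc$ with all initial places marked, $\mathsf{can}(N_m) = N_m$ by Lemma~1 (s-cells are indecomposable), and the recursive call lands in case~\eqref{case:toT-bc-base} with second component $0$, strictly smaller.

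By the induction hypothesis each $T_m$ is defined, unique, and well-typed. Since $d_m \cup \maxp{N_m} = \maxp{N_\bc}$ by definition of $d_m$, the parallel composition $\bot_{d_m} \oplus T_m$ has final interface $\maxp{N_\bc}$ independently of $m$, so the typing rules for $\triangleright$ and $\sum$ apply uniformly and yield a well-typed term whose type matches that of $N_\bc$ viewed as a marked occurrence net. Uniqueness up to the axioms follows from the uniqueness of $\mathsf{can}(N,m)$ and of each $\mathsf{can}(N_m)$ together with the inductive uniqueness of the $T_m$. The main obstacle is the strict-decrease claim when $m \subsetneq j$: it hinges on no initial place of an s-cell being isolated, which follows from $N_\bc$ being defined as the subnet spanned by $\bc \cup \bigcup_{t\in\bc}\postS{t}$ together with $\bc$'s closure under $\sqsubseteq = (\preceq \cup\ \mathsf{Pre}^{-1})^*$.
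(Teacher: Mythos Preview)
The paper states this lemma without proof, so there is nothing to compare against directly. Your proposal supplies a genuine argument where the paper gives none, and the overall strategy is sound: reduce to s-cells via Proposition~\ref{prop:decomposition}, then argue termination of the recursion in case~\eqref{case:toT-bc-decomp} by a lexicographic measure, handling the $m=j$ branch separately because it lands immediately in the base case~\eqref{case:toT-bc-base}.

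A couple of points deserve tightening. First, your justification that every $p\in j\setminus m$ has nonempty post-set is phrased loosely (``every place of $N_\bc$ lies in the pre- or post-set of some transition'' does not by itself give nonempty post-set); the clean argument is that initial places of $N_\bc$ cannot be in $\bigcup_{t\in\bc}\postS t$, hence lie in $\bc$, and every place in $\bc$ is in $\preS t$ for some $t\in\bc$ by the definition of $\sqsubseteq$. You do say this at the end, but it should replace the earlier parenthetical. Second, for well-typedness of $\bot_{d_m}\oplus T_m$ you need not only $d_m\cup\maxp{N_m}=\maxp{N_\bc}$ but also that $d_m$ is disjoint from the full support $s_m$ of $T_m$; this holds because a place of $\maxp{N_\bc}\setminus\maxp{N_m}$ is removed from $N_m$ entirely (final places of $N_\bc$ have empty post-set, so if present in $N_m$ they would still be final), and inductively the support of $T_m$ is contained in the nodes of $N_m$. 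With these two clarifications the argument is complete.
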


\begin{example}
\label{ex:to-terms-N}
Consider the marked occurrence net $(N,\{\removedp2, \removedp3\})$ in Fig.~\ref{fig:N}, whose canonical form is in Example~\ref{ex:canonical-of-N}
$$(N,\{\removedp2, \removedp3\})= N_{\bc_1}\oplus (N_{\bc_2},\{\removedp2\}); (N_{\bc_3},\{\removedp3\}) \oplus I_{\{\removedp5\}}$$
Then, the corresponding term is obtained by
\begin{equation}
\label{eq:terms-N}
\termsof{N,\{\removedp2, \removedp3\}}= \termsof{N_{\bc_1}}\oplus \termsof{N_{\bc_2},\{\removedp2\}}; \termsof{N_{\bc_3},\{\removedp3\}} \oplus \termsof{I_{\{\removedp5\}}}
\end{equation}

The term
$\termsof{N_{\bc_1}}$ is obtained by applying Eq.~\eqref{case:toT-bc-decomp}  
because  $i=\emptyset$ and $\minp{N_{\bc_1}} = \{\removedp1\} \neq \emptyset$ (see  $N_{\bc_1}$ in Fig.~\ref{fig:N-bc-1}).
Then, 
\begin{equation}
\termsof{N_{\bc_1}} = 
    \cop[\emptyset][(\bot_{d_\emptyset} \oplus T_{\emptyset})]
  +\cop[\{\removedp1\}][(\bot_{d_{\{\removedp1\}}} \oplus T_{\{\removedp1\}})]
\end{equation}
  
 Note that $N_\emptyset = N_{\bc_1}@\emptyset$  is obtained
 from $N_{\bc_1}$  by removing all elements  that depends on the unique unmarked initial place $\removedp1$. Hence, 
   $N_\emptyset = N_{\bc_1}@\emptyset = \mathbf{0} = I_{\emptyset}$. Consequently, $T_{\emptyset} = \termsof{N_m} = I_{\emptyset}$.
  Moreover ${d_\emptyset} = \{\removedp4, \removedp5\}$.  
  
  For the marking $\{\removedp1\}$, we have $N_{\{\removedp1\}} = N_{\bc_1}@\{\removedp1\} = (N_{\bc_1},\{\removedp1\})$. Since  $N_{\bc_1}$ is an s-cell, 
    ${\mathsf{can}(N_{\bc_1}@\{\removedp1\})} = {(N_{\bc_1}, \{\removedp1\})}$. Therefore, 
    $T_{\{\removedp1\}} =  \termsof{N_{\bc_1}, \{\removedp1\}}$, which is obtained by using Eq.~\eqref{case:toT-bc-base}. 
   The net $N_{\bc_1}$ has two transactions, one for each transition, i.e., $\Theta(N_{\bc_1}) = \{\{a\},\{b\}\}$.
   Then, $T_{\{\removedp1\}} = \bct[\{\removedp1\}][\{\removedp4,\removedp5\}][\{\{a\},\{b\}\}]$. Moreover,    
   ${d_{\{\removedp1\}}} = \emptyset$ because $\maxp{(N_{\{\removedp1\}})} = \maxp{(N_{\bc_1},\{\removedp1\})} = \maxp{N_{\bc_1}}$.
   Consequently,  
\begin{equation}
\label{eq:terms-bc1}
  \begin{array}{r@{\ }l}
  \termsof{N_{\bc_1}} = &
    \cop[\emptyset][(\bot_{\{\removedp4, \removedp5\}} \oplus I_{\emptyset})]
  +\cop[\{\removedp1\}][{(\bot_{\emptyset} \oplus \bct[\{\removedp1\}][\{\removedp4,\removedp5\}][\{\{a\},\{b\}\}])}]
  \\ = &
   \cop[\emptyset][\bot_{\{\removedp4, \removedp5\}}]
  +\cop[\{\removedp1\}][{\bct[\{\removedp1\}][\{\removedp4,\removedp5\}][\{\{a\},\{b\}\}]}]
  \end{array}
\end{equation}
  Intuitively, the term $\cop[\emptyset][\bot_{\{\removedp4, \removedp5\}}]$ 
   states that 
  the s-cell $\bc_1$ does not generate any token in its final places when the
  initial place $\removedp1$ remains unmarked.
  Differently,  $\cop[\{\removedp1\}][{\bct[\{\removedp1\}][\{\removedp4,\removedp5\}][\{\{a\},\{b\}\}]}]$ 
  describes the behaviour of $\bc_1$ when its initial place is marked. In this case, 
  the behaviour corresponds to the non-deterministic choice of the transactions $\{a\}$ and $\{b\}$.
  
  The encoding of $(N_{\bc_2},\{\removedp2\})$ is obtained by using Eq.~\eqref{case:toT-bc-base},
 \begin{equation} 
\label{eq:terms-bc2}
  \termsof{N_{\bc_2},\{\removedp2\}} = {\bct[\{\removedp2\}][\{\removedp6\}][\{\{c\},\{d\}\}]}
\end{equation}  

For ${(N_{\bc_3}, \{\removedp3\})}$, we obtain the following term by analogous calculations
\begin{equation}
\label{eq:terms-bc3}
\begin{array}{l@{\ }c@{\ }lll}
\termsof{N_{\bc_3},\{\removedp3\}} = &&
    \cop[\emptyset][(\bot_{\{\removedp8,\removedp9,\removedp{10}\}} \oplus {\bct[\{\removedp3\}][\{\removedp7\}][\{\{e\}\}]})]
    \\
  &+& \cop[\{\removedp4\}][{(\bot_{\{\removedp8,\removedp9,\removedp{10}\}} \oplus {\bct[\{\removedp3\}][\{\removedp7\}][\{\{e\}\}]})}]
    \\
  &+&
  \cop[\{\removedp6\}][{(\bot_{\{\removedp8\}} \oplus  \bct[{}][\{\removedp7\}][{\{\{e\}\}}] \oplus \bct[{}][\{\removedp9,\removedp{10}\}][{\{\{g\},\{h\}\}}])}]
    \\
  &+&\cop[\{\removedp4,\removedp6\}][{\bct[{}][\{\removedp{8},\removedp{9},\removedp9,\removedp{10}\}][{\{\{f\},\{e,g\},\{e,h\}\}}]}]\\
  \end{array}
\end{equation}
which describes the behaviour of $\bc_3$ for every possible initial
marking of its initial places (i.e., $\emptyset$, $\{\removedp4\}$, $\{\removedp6\}$,
and $\{\removedp4,\removedp6\}$). The most interesting case is the subterm 
$\cop[\{\removedp6\}][{(\bot_{\{\removedp8\}} \oplus  \bct[{}][\{\removedp7\}][{\{\{e\}\}}] \oplus \bct[{}][\{\removedp9,\removedp{10}\}][{\{\{g\},\{h\}\}}])}]$
obtained from $\cop[\{\removedp6\}][{(\bot_{d_{\{\removedp6\}}} \oplus T_{\{\removedp6\}})}]$. Consider
the net $N_{\{\removedp6\}} =  (N_{\bc_3}@\{\removedp3,\removedp6\})$ in Fig.~\ref{fig:N-bc-3-at-p3p6}, which
contains two s-cells. Consequently, its  
canonical form is given by the parallel composition of two s-cells, which are respectively encoded as 
 $\bct[{}][\{\removedp7\}][{\{\{e\}\}}]$ and $\bct[{}][\{\removedp9,\removedp{10}\}][{\{\{g\},\{h\}\}}]$. 
 
 Finally, 
\begin{equation}
\label{eq:terms-I5}
\termsof{I_{\{\removedp5\}}} = I_{\{\removedp5\}}
\end{equation}

%
%
 
\end{example}

To show that the term $\termsof{N,m}$ is a good representative of the probabilistic semantics of $N$, we prove that it characterises the configurations allowed by the semantics of Abbes and Benveniste. The interested reader can find all technical details in the Appendix.

\subsection{From Terms to $\Kl\D$}

Given a set $X$, a discrete probability distribution with finite support over $X$ is a function $\omega: X \to [0,1]$ such that $\sum_{x\in X}^n \omega(x) = 1$ and $\mathsf{supp}(\omega) = \{x\in X \mid \omega(x)>0\}$ is a finite set. The function $\omega$ can be sometimes written as the formal convex combination\footnote{The `ket' notation $\ket{r}{x}$ has no particular meaning: it is just syntactic sugar.}
$$
\omega = \ket{r_1}{x_1} + ... + \ket{r_n}{x_n}
$$
where $\mathsf{supp}(\omega) = \{x_1,...,x_n\}$ and $r_j = \omega(x_j)$ for $j\in[1,n]$.
We let $\D(X)$ be the set of discrete probability distributions $\omega$ over $X$ and write $\D$ for the discrete probability monad over the category $\mathbf{Set}$ of sets (as objects) and functions (as arrows). The category $\Kl\D$ is the Kleisli category of the monad $\D$: its objects are sets, its arrows  $f:X \to Y$ are functions $f:X\to \D(Y)$.
It has been shown in~\cite{DBLP:journals/entcs/JacobsZ16} that $\Kl\D$ forms a symmetric monoidal category and that Bayesian networks can be seen as special kinds of arrows in $\Kl\D$ that can be represented as string diagrams using wire-and-box notation.
According to this view, a diagram from $n$ to $k$ represents an arrow from $2^n$ to $2^k$ in $\Kl\D$.

We next show how to interpret Petri nets as Bayesian networks by exploiting $\Kl\D$.
To this aim we need to map the arrows of a commutative pre-monoidal pre-category to those of a symmetric monoidal category: in the first case the objects are sets of places, while in the latter they are natural numbers representing a totally ordered set of ports.
Therefore the mapping is defined parametrically on some arbitrarily chosen total orders of initial and final places. 

Given a set of places $s$, we let $\pi_s$ denote a bijective function $\pi_s : s \to |s|$ that assigns a position to each element of $s$. We write $\pi$ when the set $s$ is implicit. Overloading the notation, we let $\pi$ also denote  the string such that the place $p\in s$ appears in position $\pi(p)$. Note that $\pi$ is without repetitions: each $p\in s$ appears exactly once in $\pi$. We let $\epsilon$ denote the empty string (over the empty set of places). For $p\in s$ and $m\subseteq s$, we also write $p\in \pi$ and $m\subseteq \pi$ when $\pi$ is a linearization of $s$.


Given $\pi$ and $\pi'$ two such strings over $s$, we let $\chi^{\pi}_{\pi'}: |s| \to |s|$ denote the unique permutation that swaps $\pi$ into $\pi'$, i.e. such that for any $p\in s$ we have $\chi^{\pi}_{\pi'}(\pi(p)) = \pi'(p)$. By coherence of symmetries we have, e.g., $\chi^{\pi}_{\pi'};\chi^{\pi'}_{\pi''} = \chi^{\pi}_{\pi''}$.

Given two strings $\pi$ over $s$ and $\pi'$ over $s'$ with $s\cap s'=\emptyset$ we use juxtaposition to denote the string $\pi\pi'$ over $s\cup s'$ such that 
$(\pi\pi')(p) = \pi(p)$ if $p\in s$ and 
$(\pi\pi')(p) = |s|+\pi'(p)$ if $p\in s'$.

As a matter of notation, we assume that a string $\pi$ over $s$ implicitly defines an ordering over $\Bin^s$, e.g., a subset of $s$ can be seen as a binary string of length $|s|$, which are then ordered lexicographically. Correspondingly, the permutation $\chi^{\pi}_{\pi'}: |s| \to |s|$ induces an isomorphism on $\Bin^s$, that we denote with the same name $\chi^{\pi}_{\pi'}$.

In the following we assume a function $\delta$ is given that associates every constant $\bct$ with 
a finite discrete probability distribution over the elements in $\Theta$. 
To ease readability, we write $\delta_{\bct}$ for the probability distribution $\delta(\bct)$ over $\Theta$.

\begin{definition}
Let  $T:\termtype{i}{m}{s}{o}$ be a well-typed term, $\pi$ a string over $i$, $\rho$ a string over $o$. Then, 
 $\kleisli{T}{\delta}{\pi}{\rho}$ stands for an arrow $2^{|i|}\rightarrow 2^{|o|}$ in $\Kl\D$ (i.e., a diagram from $|i|$ to $|o|$) defined by structural induction as follows:

%

\begin{eqnarray}
\kleisli{I_s}{\delta}{\pi}{\rho} 
& = & 
\chi^{\pi}_{\rho}
\label{eq:id} \\
\kleisli{\bot_s}{\delta}{\epsilon}{\rho} 
& = & 
\delta^{|s|}_0 
\label{eq:bot} \\
\kleisli{T_1 \oplus T_2}{\delta}{\pi}{\rho} 
& = & 
\chi^{\pi}_{\pi_1\pi_2};(\kleisli{T_1}{\delta}{\pi_1}{\rho_1} \otimes \kleisli{T_2}{\delta}{\pi_2}{\rho_2});\chi^{\rho_1\rho_2}_{\rho} 
\label{eq:par} \\
\kleisli{T_1;T_2}{\delta}{\pi}{\rho} 
& = & 
\kleisli{T_1}{\delta}{\pi}{\gamma};\kleisli{T_2}{\delta}{\gamma}{\rho}
\label{eq:seq} \\
\kleisli{\bct}{\delta}{\epsilon}{\rho} 
& = & 
\lambda m.~\sum_{\theta:\emptyset\to m\in \Theta}\delta_{\bct}(\theta)
\label{eq:bct} \\
\kleisli{\sum_{m\subseteq i}\cop}{\delta}{\pi}{\rho} 
& = & 
[\kleisli{T_{\pi^{-1}(1)}}{\delta}{\epsilon}{\rho},..., \kleisli{T_{\pi^{-1}(2^{|i|})}}{\delta}{\epsilon}{\rho}] \label{eq:sum}
\end{eqnarray}
where in Eq.~\eqref{eq:bot} the probability distribution $\delta^{|s|}_0$ assigns probabilty $1$ to the case $\emptyset$ and $0$ to all the remaining $2^{|s|}-1$ cases and in Eq.~\eqref{eq:sum} the arrows is obtained as the copairing of each $T_m$ for all $m\subseteq i$.\footnote{It is important to mention that in Eq.~\eqref{eq:sum} the order of the arrows in the copairing is the one induced by $\pi$: remember that $\pi$ induces an order on $\Bin^i$, then $\pi^{-1}(k)$ denotes the $k$-th subset $m\subseteq i$ according to the order in $\pi$.}
\end{definition}

The cases in Eqs.~\eqref{eq:id} and~\eqref{eq:bot} are straightforward.
The cases in Eqs.~\eqref{eq:par} and~\eqref{eq:seq} just exploit the monoidal category structure.
It is worth noting that while the operation $\oplus$ is commutative, this is not the case for the monoidal operation of the Kleisli category, hence denoted with a different symbol $\otimes$.
The case in Eq.~\eqref{eq:bct} is the most interesting: $\kleisli{\bct}{\delta}{\epsilon}{\rho}$ must assign a probability distribution to the elements in the powerset of the places in $\rho$; given $m\subseteq \rho$ its probability is computed by taking the sum of the probabilities assigned by $\delta$ to all processes $\theta$ whose final places are exactly $m$. This is correct as any two such processes are mutually exclusive alternatives.
Finally, the case in Eq.~\eqref{eq:sum} is the most complex, as it exploits the hierarchical decomposition of s-cells. Here we take each $T_m$ and compute $2^{|i|}$ arrows $\kleisli{T_m}{\delta}{\epsilon}{\rho}: 2^0 \to 2^{|\rho|}$. Then, via co-pairing we get an arrow from $2^{|i|}$  to $2^{|\rho|}$. The order of the arrows in the co-pair expression is important to associate them to the right element $m\subseteq i$ (according to the order induced by $\pi$).

\begin{proposition}\label{prop:order}
$\kleisli{T}{\delta}{\pi}{\rho} = \chi^{\pi}_{\pi'};\kleisli{T}{\delta}{\pi'}{\rho'};\chi^{\rho'}_{\rho}$.
\end{proposition}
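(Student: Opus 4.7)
The plan is to proceed by structural induction on the well-typed term $T$, making essential use of the coherence of symmetries $\chi^{\pi}_{\pi'};\chi^{\pi'}_{\pi''} = \chi^{\pi}_{\pi''}$ already noted in the paper, together with the naturality and coherence axioms of the symmetric monoidal category $\Kl\D$.

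For the base cases $I_s$ and $\bot_s$: the identity case is immediate, since $\chi^{\pi}_{\pi'};\chi^{\pi'}_{\rho'};\chi^{\rho'}_{\rho} = \chi^{\pi}_{\rho}$ by coherence. For $\bot_s$, the initial interface is empty so $\pi = \pi' = \epsilon$, and one must check that $\delta_0^{|s|};\chi^{\rho'}_\rho = \delta_0^{|s|}$: this holds because $\delta_0^{|s|}$ is concentrated on $\emptyset \in \Bin^s$, and any reordering of positions fixes the empty subset. The case of $\bct$ is analogous: the probability assigned to $m\subseteq \rho$ by $\kleisli{\bct}{\delta}{\epsilon}{\rho}$ depends only on the set $m$ of final places, not on its linearization, so precomposition with the identity and postcomposition with $\chi^{\rho'}_{\rho}$ yields the same distribution.

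For the inductive cases $T_1\oplus T_2$ and $T_1;T_2$: in both cases one unfolds Eq.~\eqref{eq:par} (resp.~Eq.~\eqref{eq:seq}) on both sides and uses the inductive hypothesis on the subterms to rewrite each $\kleisli{T_j}{\delta}{\pi_j}{\rho_j}$ in terms of the reorderings $\pi_j'$, $\rho_j'$; the chain of $\chi$-permutations then collapses by coherence into $\chi^\pi_{\pi'};\ldots;\chi^{\rho'}_\rho$. For the parallel case one additionally uses functoriality of $\otimes$ and the naturality of the symmetries in $\Kl\D$ to move the symmetry $\chi^{\pi_1\pi_2}_{\pi'_1\pi'_2}$ past $\kleisli{T_1}{\delta}{\pi'_1}{\rho'_1} \otimes \kleisli{T_2}{\delta}{\pi'_2}{\rho'_2}$ into the combined symmetries on the outside.

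The main obstacle, as expected, is the sum case $\sum_{m\subseteq i}\cop$, because the copairing in Eq.~\eqref{eq:sum} is intrinsically ordered by $\pi$. The key observation is that reordering the input string from $\pi$ to $\pi'$ permutes the list of components $\kleisli{T_{\pi^{-1}(k)}}{\delta}{\epsilon}{\rho}$ exactly by the permutation $\chi^{\pi}_{\pi'}$ acting on $\Bin^i$. By the universal property of copairing in a distributive/coproduct setting (here, on the object $2^{|i|}$ viewed as an $|i|$-fold coproduct of $1$'s), precomposing a copair $[f_1,\ldots,f_n]$ with a permutation of summands reorders the components accordingly: $\chi^\pi_{\pi'};[f_{\sigma(1)},\ldots,f_{\sigma(n)}] = [f_1,\ldots,f_n]$ where $\sigma = \chi^\pi_{\pi'}$. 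Combined with the inductive hypothesis applied component-wise to each $\kleisli{T_m}{\delta}{\epsilon}{\rho}$ to switch the final ordering from $\rho$ to $\rho'$, this yields exactly $\chi^\pi_{\pi'};\kleisli{\sum_{m\subseteq i}\cop}{\delta}{\pi'}{\rho'};\chi^{\rho'}_\rho$, completing the induction.
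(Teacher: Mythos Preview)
Your proposal is correct and follows essentially the same structural-induction argument as the paper, handling each constructor via coherence of symmetries and, for the sum, the interaction of copairing with permutations of summands together with the inductive hypothesis on each $T_m$. The only minor deviation is that for $T_1\oplus T_2$ the paper avoids invoking the inductive hypothesis altogether (it simply unfolds both sides using the \emph{same} intermediate orderings $\pi_1,\pi_2,\rho_1,\rho_2$ and collapses the outer symmetries by coherence), whereas you route through the IH; both work.
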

\begin{proof}
The proof is by structural induction on $T$.

For the case $T=\bot_s$, we have 
$\chi^{\epsilon}_{\epsilon};\kleisli{\bot_s}{\delta}{\epsilon}{\rho'};\chi^{\rho'}_{\rho} =
\kleisli{\bot_s}{\delta}{\epsilon}{\rho'};\chi^{\rho'}_{\rho} =
\delta^{|s|}_0;\chi^{\rho'}_{\rho} =
\delta^{|s|}_0$.

For the case $T=I_s$, we have 
$
\chi^{\pi}_{\pi'};\kleisli{I_s}{\delta}{\pi'}{\rho'};\chi^{\rho'}_{\rho} =
\chi^{\pi}_{\pi'};\chi^{\pi'}_{\rho'};\chi^{\rho'}_{\rho} =
\chi^{\pi}_{\rho}$
by coherence of symmetries.

For the case $T=T_1 \oplus T_2$, we have
\begin{eqnarray*}
 \chi^{\pi}_{\pi'};\kleisli{T_1 \oplus T_2}{\delta}{\pi'}{\rho'};\chi^{\rho'}_{\rho}
& = &
\chi^{\pi}_{\pi'};\chi^{\pi'}_{\pi_1\pi_2};(\kleisli{T_1}{\delta}{\pi_1}{\rho_1} \otimes \kleisli{T_2}{\delta}{\pi_2}{\rho_2});\chi^{\rho_1\rho_2}_{\rho'};\chi^{\rho'}_{\rho}\\
& = &
\chi^{\pi}_{\pi_1\pi_2};(\kleisli{T_1}{\delta}{\pi_1}{\rho_1} \otimes \kleisli{T_2}{\delta}{\pi_2}{\rho_2});\chi^{\rho_1\rho_2}_{\rho}\\
& = &
\kleisli{T_1 \oplus T_2}{\delta}{\pi}{\rho}
\end{eqnarray*}
by coherence of symmetries.

For the case $T=T_1;T_2$, let us assume that
$\kleisli{T_1}{\delta}{\pi_1}{\rho_1} = \chi^{\pi_1}_{\pi'_1};\kleisli{T_1}{\delta}{\pi'_1}{\rho'_1};\chi^{\rho'_1}_{\rho_1}$ and
$\kleisli{T_2}{\delta}{\pi_2}{\rho_2} = \chi^{\pi_2}_{\pi'_2};\kleisli{T_2}{\delta}{\pi'_2}{\rho'_2};\chi^{\rho'_2}_{\rho_2}$, so that, as a particular case we have
$\kleisli{T_1}{\delta}{\pi}{\gamma} = \chi^{\pi}_{\pi'};\kleisli{T_1}{\delta}{\pi'}{\gamma}$
and
$\kleisli{T_2}{\delta}{\gamma}{\rho} = \kleisli{T_2}{\delta}{\gamma}{\rho'};\chi^{\rho'}_{\rho}$
(because $\chi^{\gamma}_{\gamma} = I_{|\gamma|}$).
Then we have
\begin{eqnarray*}
\chi^{\pi}_{\pi'};\kleisli{T_1 \oplus T_2}{\delta}{\pi'}{\rho'};\chi^{\rho'}_{\rho}
& = &
\chi^{\pi}_{\pi'}; 
\kleisli{T_1}{\delta}{\pi'}{\gamma};\kleisli{T_2}{\delta}{\gamma}{\rho'};
\chi^{\rho'}_{\rho}\\
& = & 
\kleisli{T_1}{\delta}{\pi}{\gamma};\kleisli{T_2}{\delta}{\gamma}{\rho}\\
& = &
\kleisli{T_1;T_2}{\delta}{\pi}{\rho} 
\end{eqnarray*}

For the case $T=\bct$, likewise the case for $\bot_s$, the definition is purely functional.

For the case $T=\sum_{m\subseteq i}\cop$, let us assume that for any $m\subseteq i$ we have
$\kleisli{T_m}{\delta}{\epsilon}{\rho} = 
\chi^{\epsilon}_{\epsilon};\kleisli{T_m}{\delta}{\epsilon}{\rho'};\chi^{\rho'}_{\rho} = 
\kleisli{T_m}{\delta}{\epsilon}{\rho'};\chi^{\rho'}_{\rho}$.
Then, we have
\begin{eqnarray*}
\chi^{\pi}_{\pi'};\kleisli{\sum_{m\subseteq i}\cop}{\delta}{\pi'}{\rho'};\chi^{\rho'}_{\rho}
& = &
\chi^{\pi}_{\pi'};[\kleisli{T_{\pi'^{-1}(1)}}{\delta}{\epsilon}{\rho'},..., \kleisli{T_{\pi'^{-1}(2^{|i|})}}{\delta}{\epsilon}{\rho'}];\chi^{\rho'}_{\rho} \\
& = &
\chi^{\pi}_{\pi'};[\kleisli{T_{\pi'^{-1}(1)}}{\delta}{\epsilon}{\rho'};\chi^{\rho'}_{\rho},..., \kleisli{T_{\pi'^{-1}(2^{|i|})}}{\delta}{\epsilon}{\rho'};\chi^{\rho'}_{\rho}] \\
& = &
\chi^{\pi}_{\pi'};[\kleisli{T_{\pi'^{-1}(1)}}{\delta}{\epsilon}{\rho},..., \kleisli{T_{\pi'^{-1}(2^{|i|})}}{\delta}{\epsilon}{\rho}] \\
& = &
[\kleisli{T_{\pi^{-1}(1)}}{\delta}{\epsilon}{\rho},..., \kleisli{T_{\pi^{-1}(2^{|i|})}}{\delta}{\epsilon}{\rho}] \\
& = &
\kleisli{\sum_{m\subseteq i}\cop}{\delta}{\pi}{\rho}
\end{eqnarray*}
\end{proof}

\begin{proposition}
The definition of $\kleisli{T}{\delta}{\pi}{\rho}$ is well given.
\end{proposition}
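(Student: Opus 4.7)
The plan is to proceed by structural induction on $T:\termtype{i}{m}{s}{o}$, verifying at each step three things: (a) that the right-hand side of the defining equation denotes an arrow of the required type $2^{|i|}\to 2^{|o|}$ in $\Kl\D$; (b) that it does not depend on the auxiliary linearizations chosen in the recursive cases (namely $\pi_1,\pi_2,\rho_1,\rho_2$ for $\oplus$ and $\gamma$ for $;$); and (c) that it respects the axioms identifying terms, i.e.\ those of commutative monoidal pre-categories plus $\bot_\emptyset=I_\emptyset$ and $\bot_{s_1\cup s_2}=\bot_{s_1}\oplus\bot_{s_2}$, so that the definition descends to equivalence classes.

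Check (a) is type-arithmetic against the typing rules of Figure~\ref{fig:types}: each $\chi^{-}_{-}$ is an automorphism of the appropriate power of $2$, $\delta^{|s|}_0$ lives in $\Kl\D(2^0,2^{|s|})$, and the monoidal product, composition, and copairing of $\Kl\D$ combine types as expected; uniqueness of typing guarantees that the cardinalities in each inductive equation are unambiguous. Check (b) is a direct consequence of Proposition~\ref{prop:order}: any two choices of intermediate linearization differ by insertion of a pair $\chi^{\gamma}_{\gamma'};\chi^{\gamma'}_{\gamma}=\id$ which, after shifting the permutations through the inductive hypotheses, collapses by coherence of symmetries.

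The substantive work is check (c). Most of the monoidal axioms, namely unitality of $I_\emptyset$, associativity of $\oplus$ and $;$, and the interchange law, translate directly into the corresponding laws of the symmetric monoidal category $\Kl\D$ by unfolding the definitions; the auxiliary symmetries $\chi^{-}_{-}$ introduced on the two sides either cancel pairwise or are absorbed using Proposition~\ref{prop:order}. The identities $\bot_\emptyset=I_\emptyset$ and $\bot_{s_1\cup s_2}=\bot_{s_1}\oplus\bot_{s_2}$ are immediate from the definition, since $\delta^{0}_0$ is the unit map and $\delta^{|s_1|+|s_2|}_0=\delta^{|s_1|}_0\otimes\delta^{|s_2|}_0$. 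The main obstacle is commutativity of $\oplus$: the two orderings $T_1\oplus T_2$ and $T_2\oplus T_1$ produce denotations that differ by the order of the factors of $\otimes$, and equality must be recovered by inserting the braiding of $\Kl\D$, using its naturality to pull it across $\kleisli{T_1}{\delta}{\pi_1}{\rho_1}\otimes\kleisli{T_2}{\delta}{\pi_2}{\rho_2}$, and then absorbing the resulting symmetries into the flanking $\chi^{-}_{-}$ via Proposition~\ref{prop:order} and coherence. Finally, for the sum case there is no axiom to verify beyond well-definedness of the copairing, since the summation range is a fixed powerset and the order induced by $\pi$ is handled explicitly by Proposition~\ref{prop:order}.
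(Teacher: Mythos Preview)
Your proposal is correct and follows essentially the same approach as the paper: split the task into (i) type-consistency, (ii) independence of auxiliary linearizations via Proposition~\ref{prop:order}, and (iii) compatibility with the term identifications, handling commutativity of $\oplus$ by inserting the braiding of $\Kl\D$ and absorbing symmetries by coherence. The paper's proof singles out only commutativity in item~(3), whereas you additionally spell out that the remaining monoidal axioms (associativity, unit, interchange) and the $\bot$-identities must also be checked and translate directly into the laws of $\Kl\D$; this is a reasonable point of extra care but not a different method.
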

\begin{proof}
We must show that: (1)~the typing is consistent with the definition, (2)~that the choice of $\pi_1,\rho_1,\pi_2,\rho_2$ in Eq.~\eqref{eq:par} and of $\gamma$ in Eq.~\eqref{eq:seq} is inessential for the result, and (3)~that $\kleisli{T_1\oplus T_2}{\delta}{\pi}{\rho} = \kleisli{T_2\oplus T_1}{\delta}{\pi}{\rho}$.

For (1), we must prove that if  $T:\termtype{i}{m}{s}{o}$, $\pi$ is a string over $i$ and $\rho$ is a string over $o$, then $\kleisli{T}{\delta}{\pi}{\rho}:2^{|i|}\to 2^{|o|}$.
The proof is a straightforward rule induction.

For (2), we just exploit Proposition~\ref{prop:order}. 
In the case of Eq.~\eqref{eq:par}, we have
\begin{eqnarray*}
\kleisli{T_1 \oplus T_2}{\delta}{\pi}{\rho} 
& = & 
\chi^{\pi}_{\pi_1\pi_2};(\kleisli{T_1}{\delta}{\pi_1}{\rho_1} \otimes \kleisli{T_2}{\delta}{\pi_2}{\rho_2});\chi^{\rho_1\rho_2}_{\rho} \\
& = & 
\chi^{\pi}_{\pi_1\pi_2};( (\chi^{\pi_1}_{\pi'_1};\kleisli{T_1}{\delta}{\pi'_1}{\rho'_1};\chi^{\rho'_1}_{\rho_1}) \otimes (\chi^{\pi_2}_{\pi'_2};\kleisli{T_2}{\delta}{\pi'_2}{\rho'_2};\chi^{\rho'_2}_{\rho_2}) );\chi^{\rho_1\rho_2}_{\rho} \\
& = & 
\chi^{\pi}_{\pi_1\pi_2};(\chi^{\pi_1}_{\pi'_1}\otimes \chi^{\pi_2}_{\pi'_2}) ; (\kleisli{T_1}{\delta}{\pi'_1}{\rho'_1} \otimes \kleisli{T_2}{\delta}{\pi'_2}{\rho'_2}) ; (\chi^{\rho'_1}_{\rho_1} \otimes \chi^{\rho'_2}_{\rho_2});\chi^{\rho_1\rho_2}_{\rho} \\
& = & 
\chi^{\pi}_{\pi_1\pi_2} ; \chi^{\pi_1\pi_2}_{\pi'_1\pi'_2}; (\kleisli{T_1}{\delta}{\pi'_1}{\rho'_1} \otimes \kleisli{T_2}{\delta}{\pi'_2}{\rho'_2}) ; \chi^{\rho'_1\rho'_2}_{\rho_1\rho_2} ; \chi^{\rho_1\rho_2}_{\rho} \\
& = & 
\chi^{\pi}_{\pi'_1\pi'_2} ; (\kleisli{T_1}{\delta}{\pi'_1}{\rho'_1} \otimes \kleisli{T_2}{\delta}{\pi'_2}{\rho'_2}) ; \chi^{\rho'_1\rho'_2}_{\rho} 
\end{eqnarray*}

In the case of Eq.~\eqref{eq:seq}, we have
\begin{eqnarray*}
\kleisli{T_1 ; T_2}{\delta}{\pi}{\rho} 
& = &
\kleisli{T_1}{\delta}{\pi}{\gamma};\kleisli{T_2}{\delta}{\gamma}{\rho} \\
& = &
\kleisli{T_1}{\delta}{\pi}{\gamma'};\chi^{\gamma'}_{\gamma};\chi^{\gamma}_{\gamma'};\kleisli{T_2}{\delta}{\gamma'}{\rho} \\
& = &
\kleisli{T_1}{\delta}{\pi}{\gamma'};\kleisli{T_2}{\delta}{\gamma'}{\rho}
\end{eqnarray*}

Finally, for~(3), we have:
\begin{eqnarray*}
\kleisli{T_1 \oplus T_2}{\delta}{\pi}{\rho} 
& = & 
\chi^{\pi}_{\pi_1\pi_2};(\kleisli{T_1}{\delta}{\pi_1}{\rho_1} \otimes \kleisli{T_2}{\delta}{\pi_2}{\rho_2});\chi^{\rho_1\rho_2}_{\rho} \\
& = & 
\chi^{\pi}_{\pi_1\pi_2};\chi^{\pi_1\pi_2}_{\pi_2\pi_1};(\kleisli{T_2}{\delta}{\pi_2}{\rho_2} \otimes \kleisli{T_1}{\delta}{\pi_1}{\rho_1});\chi^{\rho_2\rho_1}_{\rho_1\rho_2};\chi^{\rho_1\rho_2}_{\rho} \\
& = & 
\chi^{\pi}_{\pi_2\pi_1};(\kleisli{T_2}{\delta}{\pi_2}{\rho_2} \otimes \kleisli{T_1}{\delta}{\pi_1}{\rho_1});\chi^{\rho_2\rho_1}_{\rho} \\
& = & 
\kleisli{T_2 \oplus T_1}{\delta}{\pi}{\rho} 
\end{eqnarray*}
\end{proof}


\begin{example}

Consider the net depicted in Fig.~\ref{fig:N} and the corresponding term calculated in Example~\ref{ex:to-terms-N}.
We show the encoding of the net by considering a generic distribution $\delta$ and use lexicographic order of places. 
We start from Eq.~\ref{eq:terms-N}.
$$
\termsof{N,\{\removedp2, \removedp3\}} = \termsof{N_{\bc_1}}\oplus \termsof{N_{\bc_2},\{\removedp2\}}; \termsof{N_{\bc_3},\{\removedp3\}} \oplus \termsof{I_{\{\removedp5\}}}
$$

\begin{figure}[tp]
          $$
          \xymatrix@R=1pc@C=-2pc{
            &&\ar@{-}^(.6){1}[dd]
            && &
            \\
            \\
            &&{\kleisli{\termsof{N_{\bc_1}}}{\delta}{}{}}\ar@{-}^(.7){4}@<-2ex>[dd]
            \ar@{-}@<+2ex>`d[dr] 
             	`[rrrdd]
                 `[rrrddd] 
                 `_l[dddl]
            	[ddddl]<0ex> ^5
              \POS[]+<0pc,0pc> *+=<5pc,1.3pc>[F-]{}
           & 
           & 
            {\kleisli{\termsof{N_{\bc_2},\{\removedp2\}}}{\delta}{}{}}\ar@{-}^(.7){6}[dd]   
                 \POS[]+<0pc,0pc> *+=<6pc,1.3pc>[F-]{}
           \\
           &{\phantom{\termsof{N_{\bc_2}}}}&{\phantom{\termsof{N_{\bc_2}}}}&&&&
           \\
           &{\phantom{\termsof{N_{\bc_2}}}}&
            {\phantom{\termsof{N_{\bc_2}}}}
            &
            {\kleisli{\termsof{N_{\bc_3},\{\removedp3\}}}{\delta}{}{}}
             \POS[]+<0pc,0pc> *+=<10pc,1.3pc>[F-]{}
	     \ar@{-}^(.8){7}@<-6ex>[dd]\ar@{-}^(.8){8}@<-2ex>[dd]\ar@{-}^(.8){9}@<2ex>[dd]\ar@{-}^(.8){10}@<7ex>[dd]
            &
            {\phantom{\termsof{N_{\bc_2}}}}
            &
	   \\            {\phantom{\termsof{N_{\bc_2}}}}
	   &&
	   {\phantom{\termsof{N_{\bc_2}}}}
	   &&&&
	   \\
	  &
          &{\phantom{\termsof{N_{\bc_2}}}}&
          &
          &
          }
          $$
\caption{String diagram for $\kleisli{\termsof{N,\{\removedp2, \removedp3\}}}{\delta}{}{}$}\label{fig:string-N}
 \end{figure}

Then, the string diagram for  $\kleisli{\termsof{N,\{\removedp2, \removedp3\}}}{\delta}{\removedp1}{\removedp5,\removedp7,\removedp8,\removedp9,\removedp10}$ is
shown in Fig.~\ref{fig:string-N} and can be computed as follows. 
$$
\begin{array}{cll}
& \kleisli{\termsof{N,\{\removedp2, \removedp3\}}}{\delta}{\removedp1}{\removedp5,\removedp7,\removedp8,\removedp9,\removedp10} 
\\
= 
&
\kleisli{\termsof{N_{\bc_1}}\oplus \termsof{N_{\bc_2},\{\removedp2\}}; \termsof{N_{\bc_3},\{\removedp3\}} \oplus \termsof{I_{\{\removedp5\}}}
}{\delta}{\removedp1}{\removedp5,\removedp7,\removedp8,\removedp9,\removedp10}
&
\mbox{by def.}
\\
=
&
\kleisli{\termsof{N_{\bc_1}}\oplus \termsof{N_{\bc_2},\{\removedp2\}}}{\delta}{\removedp1}{\removedp4,\removedp5,\removedp6}
; 
\kleisli{\termsof{N_{\bc_3},\{\removedp3\}} \oplus \termsof{I_{\{\removedp5\}}}
}{\delta}{\removedp4,\removedp5,\removedp6}{\removedp5,\removedp7,\removedp8,\removedp9,\removedp10}
&
\mbox{by}\ \eqref{eq:seq}\\
=
&
\chi^{\removedp1}_{\removedp1\epsilon};\kleisli{\termsof{N_{\bc_1}}}{\delta}{\removedp1}{\removedp4,\removedp5}
\otimes
\kleisli{\termsof{N_{\bc_2},\{\removedp2\}}}{\delta}{\epsilon}{\removedp6};\chi^{\removedp4,\removedp5,\removedp6}_{\removedp4,\removedp5,\removedp6}
; 
&
\mbox{by}\ \eqref{eq:par}
\\
&
\chi^{\removedp4,\removedp5,\removedp6}_{\removedp4,\removedp6,\removedp5};
\kleisli{\termsof{N_{\bc_3},\{\removedp3\}}}{\delta}{\removedp4,\removedp6}{\removedp7,\removedp8,\removedp9,\removedp10}
\otimes
\kleisli{\termsof{I_{\{\removedp5\}}}}{\delta}{\removedp5}{\removedp5};\chi^{\removedp7,\removedp8,\removedp9,\removedp10,\removedp5}_{\removedp5,\removedp7,\removedp8,\removedp9,\removedp10}
\end{array}
$$

We now show the calculation for each of the boxes in Fig.~\ref{fig:string-N}. 
To ease readability, in the following we let
\[
\begin{array}{rclcrcl}
\cst_{a} & = & {\bct[\{\removedp1\}][\{\removedp4,\removedp5\}][\{\{a\},\{b\}\}]} & \quad &
\cst_{c} & = & {\bct[\{\removedp2\}][\{\removedp6\}][\{\{c\},\{d\}\}]} \\
\cst_{e} & = & {\bct[\{\removedp3\}][\{\removedp7\}][\{\{e\}\}]}  & \quad &
\cst_{g} & = & \bct[{}][\{\removedp9,\removedp{10}\}][{\{\{g\},\{h\}\}}]) \\
\cst_{f} & = & {\bct[{}][\{\removedp{8},\removedp{9},\removedp9,\removedp{10}\}][{\{\{f\},\{e,g\},\{e,h\}\}}]}
\end{array}
\]

For
$\kleisli{\termsof{N_{\bc_1}}}{\delta}{\removedp1}{\removedp4,\removedp5}$, we start from Eq.~\eqref{eq:terms-bc1}, i.e.,
$$  \termsof{N_{\bc_1}} =   \cop[\emptyset][\bot_{\{\removedp4, \removedp5\}}]
  +\cop[\{\removedp1\}][\cst_{a}]
$$
By Eq.~\eqref{eq:sum},
\begin{equation}
\begin{array}{cclcc|c|c|c|c|c|l}
\cline{6-10}
&&&&&&\emptyset & \{\removedp4\} & \{\removedp5\} & \{\removedp4, \removedp5\}
\\
\cline{6-10}
 \multirow{2}{*}{$\kleisli{\termsof{N_{\bc_1}}}{\delta}{\removedp1}{\removedp4,\removedp5} = $\!\!\!\!} 
 & 
 \ldelim[{2}{0.1pt}
 & 
 \kleisli{\bot_{\{\removedp4, \removedp5\}}}{\delta}{\epsilon}{\removedp4,\removedp5}\!\!\!\!\!
 & 
 \rdelim]{2}{.1pt} 
 &
 \multirow{2}{*}{$ = $\ }
 &\emptyset & 1 & 0 & 0 & 0 
\\
\cline{6-10}
 &
 &
 \kleisli{\cst_{a}}{\delta}{\epsilon}{\removedp4,\removedp5}\!\!\!\!\!
 &
 &
 &\{\removedp1\} & 0 & p_a & 1-p_a & 0  
\\
\cline{6-10}
\end{array}
\label{eq:matrix-C1}
\end{equation}
where the first row in the table corresponds to $\delta^{|\{4,5\}|}_0$,  as prescribed by Eq.~\eqref{eq:bot}. The 
second row is obtained by Eq.~\eqref{eq:bct}, by assuming that $\delta_{\cst_{a}}(\{a\}) = p_a$
and $\delta_{\cst_{a}}(\{b\}) = 1 - p_a$.

For $\kleisli{\termsof{N_{\bc_2},\{\removedp2\}}}{\delta}{\epsilon}{\removedp6}$, we start from  Eq.~\eqref{eq:terms-bc2}, i.e., 
$$
  \termsof{N_{\bc_2},\{\removedp2\}} = \cst_{c}
$$
Then,
\begin{equation}
\begin{array}{cclcc|c|c|c|c|c|l}
\cline{6-8}
&&&&&&\emptyset & \{\removedp6\} 
\\
\cline{6-8}
 \kleisli{\termsof{N_{\bc_2},\{\removedp2\}}}{\delta}{\epsilon}{\removedp6} = \!\!\!\!
 & 
 & 
 \kleisli{\cst_{c}}{\delta}{\epsilon}{\removedp6}\!\!\!\!\!
 & 
 &
  = \ 
 &\emptyset & 1 - p_c & p_c
\\
\cline{6-8}
\end{array}
\label{eq:matrix-C2}
\end{equation}
where $\delta_{\cst_{c}}(\{c\}) = p_c$
and $\delta_{\cst_{c}}(\{d\}) = 1 - p_c$.

For $\kleisli{\termsof{N_{\bc_3},\{\removedp3\}}}{\delta}{\removedp4,\removedp6}{\removedp7,\removedp8,\removedp9,\removedp10}$, we start from 
Eq.~\eqref{eq:terms-bc3}, i.e., 
$$
\begin{array}{l@{\ }c@{\ }lll}
\termsof{N_{\bc_3},\{\removedp3\}} = &&
    \cop[\emptyset][(\bot_{\{\removedp8,\removedp9,\removedp{10}\}} \oplus {\cst_{e}})]
    \\
  &+& \cop[\{\removedp4\}][{(\bot_{\{\removedp8,\removedp9,\removedp{10}\}} \oplus {\cst_{e}})}]
    \\
  &+&
  \cop[\{\removedp6\}][{(\bot_{\{\removedp8\}} \oplus  \cst_{e} \oplus \cst_{g})}]
    \\
  &+&\cop[\{\removedp4,\removedp6\}][{\cst_{f}}]\\
  \end{array}
$$

\begin{equation}
\begin{array}{l@{\ = \ } l}
 \kleisli{\termsof{N_{\bc_3},\{\removedp3\}}}{\delta}{\epsilon}{\removedp7,\removedp8,\removedp9,\removedp10} 
 &
\begin{array}{clc}
 \ldelim[{4}{0.1pt}
 & 
 \kleisli{\bot_{\{\removedp8,\removedp9,\removedp{10}\}} \oplus {\cst_{e}}}{\delta}{\epsilon}{\removedp7,\removedp8,\removedp9,\removedp10} 
 & 
\rdelim]{4}{.1pt} 
\\
 &
 \kleisli{{\bot_{\{\removedp8,\removedp9,\removedp{10}\}} \oplus {\cst_{e}}}}{\delta}{\epsilon}{\removedp7,\removedp8,\removedp9,\removedp10}
\\
&
\kleisli{\bot_{\{\removedp8\}} \oplus  \cst_{e} \oplus \cst_{g}}{\delta}{\epsilon}{\removedp7,\removedp8,\removedp9,\removedp10}
\\
&
\kleisli{\cst_{f}}{\delta}{\epsilon}{\removedp7,\removedp8,\removedp9,\removedp10}
\end{array}
\\[20pt]
\multicolumn{2}{c}{}\\
&\quad
\begin{array}{|c|c|c|c|c|c|c|}
\hline
& \emptyset & \{\removedp7\} & \{\removedp7,\removedp9\} & \{\removedp7,\removedp{10}\} & \{\removedp8\} & \ldots 
\\
\hline
\emptyset & 0 & 1 & 0 & 0 & 0 & 0
\\
\hline
\{\removedp4\} & 0 & 1 & 0 & 0 & 0 & 0
\\
\hline
\{\removedp6\} & 0 & 0 & p_g & 1 - p_g & 0 & 0
\\
\hline
\{\removedp4,\removedp6\} & 0 & 0 & p_g' & 1-p_f - p_g' & p_f & 0
\\
\hline
\end{array}
\end{array}
\label{eq:matrix-C3}
\end{equation}
where the last column (i.e., the one tagged with dots) represents all the remaining nine (inessential) cases. 
The first two rows are obtained as follows: 
$$
\begin{array}{l@{\ =\ }l}
\kleisli{\bot_{\{\removedp8,\removedp9,\removedp{10}\}} \oplus {\cst_{e}}}{\delta}{\epsilon}{\removedp7,\removedp8,\removedp9,\removedp10} 
&
\kleisli{\bot_{\{\removedp8,\removedp9,\removedp{10}\}}}{\delta}{\epsilon}{\removedp8,\removedp9,\removedp10}
\otimes
\kleisli{\cst_{e}}{\delta}{\epsilon}{\removedp7};
\chi^{\removedp8,\removedp9,\removedp10,\removedp7}_{\removedp7,\removedp8,\removedp9,\removedp10}
\\[20pt]
&
\begin{array}{|c|c|c|c|c|c|c|}
\hline
& \emptyset & \ldots
\\
\hline
\emptyset & 1 & 0 
\\
\hline
\end{array}
\otimes
\begin{array}{|c|c|c|c|c|c|c|}
\hline
& \emptyset & \{\removedp7\}
\\
\hline
\emptyset & 0 & 1 
\\
\hline
\end{array}
\\[20pt]
&
\begin{array}{|c|c|c|c|c|c|c|}
\hline
& \emptyset & \{\removedp7\} & \ldots
\\
\hline
\emptyset & 0 & 1 & 0
\\
\hline
\end{array}
\end{array}
$$

The third row is obtained analogously after fixing  $\delta_{\cst_{g}}(\{g\}) = p_g$
and $\delta_{\cst_{g}}(\{h\}) = 1 - p_g$. The last row is obtained by 
Eq.~\eqref{eq:sum} and taking 
$\delta_{\cst_{f}}(\{f\}) = p_f$, 
$\delta_{\cst_{f}}(\{e,g\}) = p_g'$, 
and 
$\delta_{\cst_{f}}(\{e, h\}) = 1 - p_f - p_g'$.

\end{example}

\section{Forward and Backward Inference and Disintegration}\label{sec:inference}
 
In this section we illustrate how to perform bayesian reasoning over Petri nets by 
following the approach presented in~\cite{DBLP:journals/corr/abs-1709-00322}. We first recall some notions, which will be 
used in our reasoning. {\em Marginalisation} is an operation $\Pi_1:X\oplus Y\to X$
 that projects a joint distribution 
$P(x,y)$ on $X\oplus Y$ to the marginal distribution on $X$ computed as 
$P(x) = \sum_y P(x,y)$. Similarly, we  have
$\Pi_2:X\oplus Y\to Y$ for the  projection of  $P(x,y)$ 
over $Y$ defined as $P(y) = \sum_y P(x,y)$.

Consider the arrow $\termsof{N,\{\removedp2, \removedp3\}}:2^1\to2^5$ in Fig.~\ref{fig:string-N}
and suppose we are interested in reasoning about the probability of producing a token  
in the place $\removedp7$. In such case,  marginalisation can be used to obtain an 
arrow $f :2^1\to2^1$ that discards the wires
corresponding to the places $\removedp5$, $\removedp8$, $\removedp9$ and $\removedp{10}$, 
as shown in Fig.~\ref{fig:simplified-string-N}.
\begin{figure}[tp]
          $$
          \xymatrix@R=.5pc@C=-2pc{
            &&\ar@{-}^(.6){1}[dd]
            && &
            \\
            \\
            &&{\kleisli{\termsof{N_{\bc_1}}}{\delta}{}{}}\ar@{-}^(.7){4}@<-2ex>[dd]
            \ar@{-}^(.7){5}@{-|}@<+2ex>[d]
              \POS[]+<0pc,0pc> *+=<5pc,1.3pc>[F-]{}
           & 
           & 
            {\kleisli{\termsof{N_{\bc_2},\{\removedp2\}}}{\delta}{}{}}\ar@{-}^(.7){6}[dd]   
                 \POS[]+<0pc,0pc> *+=<6pc,1.3pc>[F-]{}
           \\
           &{\phantom{\termsof{N_{\bc_2}}}}&{\phantom{\termsof{N_{\bc_2}}}}&&&&
           \\
           &{\phantom{\termsof{N_{\bc_2}}}}&
            {\phantom{\termsof{N_{\bc_2}}}}
            &
            {\kleisli{\termsof{N_{\bc_3},\{\removedp3\}}}{\delta}{}{}}
             \POS[]+<0pc,0pc> *+=<10pc,1.3pc>[F-]{}
	     \ar@{-}^(.8){7}@<-6ex>[d]\ar@{-|}^(.8){8}@<-2ex>[d]\ar@{-|}^(.8){9}@<2ex>[d]\ar@{-|}^(.8){10}@<7ex>[d]
            &
            {\phantom{\termsof{N_{\bc_2}}}}
            &
	   \\            {\phantom{\termsof{N_{\bc_2}}}}
	   &&
	   {\phantom{\termsof{N_{\bc_2}}}}
	   &&&&
          }
          $$
\caption{Simplified string diagram for $\kleisli{\termsof{N,\{\removedp2, \removedp3\}}}{\delta}{}{}$}\label{fig:simplified-string-N}
 \end{figure}
The wire diagram corresponds to the term:
$$
(\kleisli{\termsof{N_{\bc_1}}}{\delta}{\removedp1}{\removedp4,\removedp5}; \Pi_1) 
\otimes 
\kleisli{\termsof{N_{\bc_2},\{\removedp2\}}}{\delta}{\epsilon}{\removedp6};
(\kleisli{\termsof{N_{\bc_3},\{\removedp3\}}}{\delta}{\removedp4,\removedp6}{\removedp7,\removedp8,\removedp9,\removedp{10}}; \Pi_1 \otimes \Pi_1; \Pi_1)
$$
From Eq.~\eqref{eq:matrix-C1}, we  obtain
\begin{equation}
\begin{array}{c|c|c|c|c|}
\cline{2-4}
&&\emptyset & \{\removedp4\} 
\\
\cline{2-4}
 \multirow{2}{*}{$\alpha = \kleisli{\termsof{N_{\bc_1}}}{\delta}{\removedp1}{\removedp4,\removedp5}; \Pi_1= $} 
&\emptyset & 1 & 0  
\\
\cline{2-4}
 &\{\removedp1\} & 1-p_a & p_a   
\\
\cline{2-4}
\end{array}
\end{equation}
Analogously, from Eq.~\eqref{eq:matrix-C3}
\begin{equation}
\begin{array}{l@{\ = \ } l}
 \gamma = \kleisli{\termsof{N_{\bc_3},\{\removedp3\}}}{\delta}{\epsilon}{\removedp7,\removedp8,\removedp9,\removedp10}; \Pi_1\otimes\Pi_1;\Pi_1
&
\begin{array}{|c|c|c|c|c|c|c|}
\hline
& \emptyset & \{\removedp7\}  
\\
\hline
\emptyset & 0 & 1 
\\
\hline
\{\removedp4\} & 0 & 1 
\\
\hline
\{\removedp6\} & 0 & 1 
\\
\hline
\{\removedp4,\removedp6\} & p_f & 1-p_f 
\\
\hline
\end{array}
\end{array}
\end{equation}
We write $\beta$ for $\kleisli{\termsof{N_{\bc_2},\{\removedp2\}}}{\delta}{\epsilon}{\removedp6}$ in Eq.~\ref{eq:matrix-C2}.

Then, $\alpha\otimes \beta$ is obtained as 
\begin{equation}
\begin{array}{ccc|c|c|c|c|c|l}
\cline{4-8}
&&&&\emptyset & \{\removedp4\} & \{\removedp6\} & \{\removedp4, \removedp6\}
\\
\cline{4-8}
&
\multirow{2}{*}{$\alpha \otimes \beta =$\!\!\!}
&
&\emptyset & 1-pc & 0 & p_c & 0 
\\
\cline{4-8}
 &
 & 
 &\{\removedp1\} & (1-p_a)(1-p_c) & p_a(1-p_c)  & (1-p_a)p_c & p_ap_c  
\\
\cline{4-8}
\end{array}
\end{equation}
Finally, 
\begin{equation}
\begin{array}{c|c|c|c|c|c|l}
\cline{2-4}
&&\emptyset & \{\removedp7\} 
\\
\cline{2-4}
\multirow{2}{*}{$\psi = \alpha \otimes \beta;\gamma =$}
&\emptyset & 0 & 1 
\\
\cline{2-4}
 &\{\removedp1\} & p_ap_cp_f & 1-p_ap_cp_f  
\\
\cline{2-4}
\end{array}
\label{eq:matrix-ABC}
\end{equation}

This means that, given that a token appears in place $\removedp1$ with probability $1$, the place $\removedp7$ will be marked with probability $1-p_ap_cp_f$.
Using the notation in~\cite{DBLP:journals/entcs/JacobsZ16}, this value is computed by precomposing the state $\omega = \ket{1}{\{\removedp1\}}$ with the arrow $\psi$, i.e., by letting $\psi_*(\omega)=\omega;\psi = \ket{p_ap_cp_f}{\emptyset} + \ket{(1-p_ap_cp_f)}{\{\removedp7\}}$.

As an example of backward reasoning, given the \emph{a priori} probability $\frac{1}{2}$ that a token can appear in place $\removedp1$, we can compute the probability that place $\removedp1$ is marked given that a token appears in place $\removedp7$, which is
$$
\frac{1-p_ap_cp_f }{1+(1-p_ap_cp_f)} = \frac{1-p_ap_cp_f }{2-p_ap_cp_f}
$$ 
Using the notation in~\cite{DBLP:journals/entcs/JacobsZ16}, 
this value is computed by setting (for $\psi:X\to \D(Y)$ and $q$ a predicate on $Y$)
\begin{eqnarray*}
\psi^*(q)(x) 
& = &
\sum_{y\in Y} \psi(x)(y)\cdot q(y)
\\
& = & 
\psi(x)(\emptyset)\cdot q(\emptyset) + \psi(x)(\{\removedp7\})\cdot q(\{\removedp7\})
\\
& = & 
\psi(x)(\{\removedp7\})
\end{eqnarray*}
where $q$ is the predicate such that $q(\{\removedp7\})=1$ (and $q(\emptyset)=0$) and then computing
\begin{eqnarray*}
\omega_{|\psi^*(q)} 
& = & 
\sum_{x\in X} \ket{\frac{\omega(x)\cdot\psi^*(q)(x)}{\omega\models \psi^*(q)}}{x}
\\
& = & 
\ket{\frac{\omega(\emptyset)\cdot\psi^*(q)(\emptyset)}{\omega\models \psi^*(q)}}{\emptyset}
+
\ket{\frac{\omega(\{\removedp1\})\cdot\psi^*(q)(\{\removedp1\})}{\omega\models \psi^*(q)}}{\{\removedp1\}}
\\
& = & 
\ket{\frac{\frac{1}{2}\cdot 1}{\omega\models \psi^*(q)}}{\emptyset}
+
\ket{\frac{\frac{1}{2}\cdot (1-p_ap_cp_f)}{\omega\models \psi^*(q)}}{\{\removedp1\}}
\\
& = & 
\ket{\frac{\frac{1}{2}}{\omega\models \psi^*(q)}}{\emptyset}
+
\ket{\frac{\frac{1-p_ap_cp_f}{2}}{\omega\models \psi^*(q)}}{\{\removedp1\}}
\end{eqnarray*}
where
\begin{eqnarray*}
\omega\models \psi^*(q) 
& = & \sum_{x\in X} \omega(x)\cdot \psi^*(q)(x)
\\
& = & \omega(\emptyset)\cdot \psi^*(q)(\emptyset) + 
\omega(\{\removedp1\})\cdot \psi^*(q)(\{\removedp1\})  
\\
& = &
\frac{1}{2}\cdot 1 + \frac{1}{2}\cdot(1-p_ap_cp_f)
\\
& = &
\frac{2-p_ap_cp_f}{2}
\end{eqnarray*}



\section{Conclusion}\label{sec:conc}

In this paper we have shown how to derive a Bayesian network from a probabilistic Petri net in the style of~\cite{DBLP:journals/iandc/AbbesB06,BMM18}. The construction is computed via an intermediate representation of a PN as a term in a monoidal (pre-)category structure, exploiting the string diagram representation of BN outlined in~\cite{DBLP:journals/entcs/JacobsZ16}. As shown in Section~\ref{sec:inference}, the BN representation can then be exploited to reason about conditional probabilities of marking reachability, via forward and backward inference. Notably, when transitions have non-empty post-sets then each marking corresponds to a unique deterministic process (i.e., a unique configuration of the underlying event structure) and thus the inference can be transferred to processes as well.

There are many ways in which PN have been enriched with probabilistic behaviour~\cite{DBLP:conf/performance/DuganTGN84,DBLP:journals/tocs/MarsanCB84,Molloy:1985:DTS:4101.4110,DBLP:conf/apn/EisentrautHK013,kudlek2005probability,haar2002probabilistic,bouillard2009critical,katoen1993modeling}. 
To avoid confusion, most of them replace nondeterminism with probability only in part, or focus on interleaved computations, or introduce time dependent stochastic distributions. The approach considered here differs from the others in the literature because: (1) it is purely probabilistic, (2) it deals well with concurrent computations, (3) it addresses confusion.

In the literature, there are very few papers investigating the connections between PN and BN.
In~\cite{lautenbach:etal:2:2006} the relation is drawn in the opposite direction, i.e., PN are used to encode the reasoning of BN. 
The connection established in this paper provides two views for the same
model: on the one side, the standard token game of the PN view (suitable extended with
probabilistic choices) gives a concrete, probabilistic computational model. On the
other side, the BN semantics allows us to reason about the properties of the computations
of the underlying concrete model.

\section*{References}

\bibliography{biblio}

\appendix


\section{Correctness of mapping to terms}\label{sec:appe}

 The remaining of this section is devoted to establish a correspondence between 
 the semantics of Abbes and Benveniste for a marked net $(N, m)$ and 
  the corresponding term $\termsof{N,m}$. 
 


\subsection{Prime Event Structures}
\label{sec:eventstruct}
A \emph{prime event structure} (also \emph{PES})
~\cite{DBLP:journals/tcs/NielsenPW81,DBLP:conf/ac/Winskel86} is a triple
$\mathcal{E}=(E,\preceq,\#)$ where: $E$ is the set of \emph{events};
the \emph{causality
  relation} $\preceq$ is a partial order on events; the \emph{conflict relation} $\#$ is a symmetric, irreflexive relation on events
 such that conflicts are inherited by
causality, i.e., $\forall e_1,e_2,e_3\in E.~ e_1\# e_2 \preceq e_3
\Rightarrow e_1 \# e_3$. 

 The 
 PES
$\mathcal{E}_N$ associated with a net
$N$ can be formalised using category theory as a chain of universal
constructions, called coreflections.
Hence, for each PES
$\mathcal{E}$, there is a standard, unique (up to isomorphism) nondeterministic occurrence net $N_{\mathcal{E}}$
that yields $\mathcal{E}$ and thus we can freely move from one setting to
the other.


Given an event $e$, its \emph{downward closure} $\down{e} = \{ e'\in E \mid e'\preceq e\}$ is the set of causes of $e$.
As usual, we assume that $\down{e}$ is finite for any $e$.
Given $B\subseteq E$, we say that $B$ is \emph{downward closed} if
$\forall e\in B.~ \down{e} \subseteq B$ and that $B$ is \emph{conflict-free} if $\forall e,e'\in B.~ \neg (e \# e')$.
We let the \emph{immediate conflict} relation $\#_0$ be
defined on events by letting $e \#_0 e'$ iff $(\down{e} \times \down{e'}) \cap
\# = \{(e,e')\}$, i.e., two events are in immediate conflict if they
are in conflict but their causes are compatible.

\subsection{Abbes and Benveniste's Branching Cells}\label{sec:ABcells}

In the following we assume that a finite PES $\mathcal{E}=(E,\preceq,\#)$ is given.
A \emph{prefix} $B\subseteq E$ is any downward-closed set of events
(possibly with conflicts). Any prefix $B$  induces an event structure $\mathcal{E}_B
=(B, \preceq_B, \#_B)$ where $\preceq_B$ and $\#_B$ are
the restrictions of $\preceq$ and $\#$ to the events in $B$.
%
%
A \emph{stopping prefix} is a prefix $B$ that is closed under
immediate conflicts, i.e., $\forall e\in B, e'\in E.~ e\#_0 e'
\Rightarrow e'\in B$.  Intuitively, a stopping prefix is a prefix
whose (immediate) choices are all available.  
It is \emph{initial} if the only stopping prefix strictly
included in $B$ is $\emptyset$.  


A \emph{configuration} $v \subseteq \mathcal{E}$ is any set of events
that is downward closed and conflict-free.  Intuitively, a
configuration represents (the state reached after executing) a
concurrent but deterministic computation of $\mathcal{E}$.
Configurations are ordered by inclusion and we denote by
$\mathcal{V}_{\mathcal{E}}$ the poset of
 configurations of $\mathcal{E}$ and by $\Omega_{\mathcal{E}}$ the poset of maximal
configurations of $\mathcal{E}$.  

The \emph{future} of a configuration $v$, written $E^v$, is the set of events that can be executed after $v$, i.e.,
$E^v = \{e\in E\setminus v\ \mid\ \forall e'\in v. \neg(e \# e')\}$.
We write $\mathcal{E}^v$ for the event structure induced by $E^v$.  

A configuration $v$ is \emph{stopped } if there is a stopping prefix $B$ with $v\in \Omega_B$.
%
and $v$ is \emph{recursively stopped} (or r-stopped) if there is a sequence of
configurations $\emptyset= v_0 \subset \ldots \subset v_n=v$ such that for any $i\in [0,n)$ the set $v_{i+1} \setminus v_i$ is a
  stopped configuration of  $\mathcal{E}^{v_i}$ for
  $v_i$ in $\mathcal{E}$.
%

A \emph{branching cell} is any initial stopping prefix of the future
$\mathcal{E}^v$ of a  recursively stopped configuration $v$.
Intuitively, a branching cell is a
minimal subset of events closed under immediate conflict.  
%
We remark that branching cells are determined by
considering the whole (future of the) event structure $\mathcal{E}$ and
they are recursively computed as $\mathcal{E}$ is executed. Remarkably, 
every maximal configuration has a branching cell decomposition.

\begin{example}
  \label{ex:bc-decomposition}

  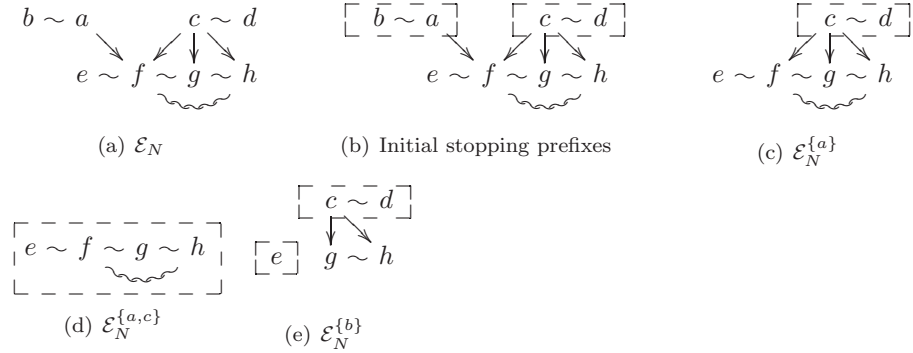
\begin{figure}[t]
    \subfigure[$\mathcal{E}_N$\label{fig:multcondconfusion}]{
      $$
      \xymatrix@R=.6pc@C=.8pc{
        b\ar@{~}[r]
        &
        a\ar[dr]
        &
        &
        c\ar@{~}[r]\ar[dl]\ar[d]\ar[dr]
        &
        d
        \\
        &
        e\ar@{~}[r]
        &
        f\ar@{~}[r]
        &
        g\ar@{~}[r]  
        &
        h\ar@/^1pc/@{~}[ll]
        \\
        {\ }      }
      $$}
    \subfigure[Initial stopping prefixes\label{fig:sp-1}]{
      $$
      \xymatrix@R=.6pc@C=.8pc{
        b\ar@{~}[r]
        &
        a\ar[dr]
        \POS[]-<1pc,0pc> *+=<3.5pc,1pc>[F--]{}
        &
        &
        c\ar@{~}[r]\ar[dl]\ar[d]\ar[dr]
        &
        d
        \POS[]-<1pc,0pc> *+=<3.5pc,1pc>[F--]{}
        \\
        &
        e\ar@{~}[r]
        &
        f\ar@{~}[r]
        &
        g\ar@{~}[r]  
        &
        h\ar@/^1pc/@{~}[ll]
        \\
        {\ }     }
      $$}
    \subfigure[$\mathcal{E}_N^{\{a\}}$\label{fig:sp-2}]{ 
      $$
      \xymatrix@R=.6pc@C=.8pc{
        &
        &
        c\ar@{~}[r]\ar[dl]\ar[d]\ar[dr]
        &
        d
        \POS[]-<1pc,0pc> *+=<3.5pc,1pc>[F--]{}
        \\
        e\ar@{~}[r]
        &
        f\ar@{~}[r]
        &
        g\ar@{~}[r]
        &
        h\ar@/^1pc/@{~}[ll]
        \\
        {\ }      }
      $$}
    \subfigure[$\mathcal{E}_N^{\{a,c\}}$\label{fig:sp-3}]{
    $$
      \xymatrix@R=.6pc@C=.8pc{
        \\
        e\ar@{~}[r]
        &
        f\ar@{~}[r]
        \POS[]+<1pc,-.3pc> *+=<6.5pc,2.2pc>[F--]{}
        &
        g\ar@{~}[r]
        &
        h \ar@/^1pc/@{~}[ll]
        \\
        {\ }      }
      $$
}      
      \subfigure[$\mathcal{E}_N^{\{b\}}$\label{fig:sp-4}]{
      $$
      \xymatrix@R=.6pc@C=.8pc{
        &
        c\ar@{~}[r]\ar[d]\ar[dr]
        &
        d
        \POS[]-<1pc,0pc> *+=<3.5pc,1pc>[F--]{}
        \\
        e
        \POS[]-<0pc,0pc> *+=<1.3pc,1pc>[F--]{}
        &
        g\ar@{~}[r]
        &
        h\\
        {\ }      }
      $$}
    \caption{AB's branching cell decomposition (running example)}\label{fig:sp-all}
  \end{figure}
  Consider the PES $\mathcal{E}_N$ in
  Fig.~A.\ref{fig:multcondconfusion} and its maximal configuration $v =
  \{a,c,e,g\}$. We show that $v$ is recursively stopped by exhibiting a branching cell decomposition. The initial
  stopping prefixes of $\mathcal{E}_N = \mathcal{E}_N^\emptyset$ are
  shown in Fig.~A.\ref{fig:sp-1}. There are two possibilities for
  choosing $v_1\subseteq v$ and $v_1$ recursively stopped: either $v_1 =
  \{a\}$ or $v_1=\{c\}$. When $v_1 = \{a\}$, the choices for $v_2$ are
  determined by the stopping prefixes of $\mathcal{E}_N^{\{a\}}$
  (see Fig.~A.\ref{fig:sp-2}) and the only
  possibility is $v_2 = \{a, c\}$. From $\mathcal{E}_N^{\{a,c\}}$
  in Fig.~A.\ref{fig:sp-3}, we take $v_3 =
  v$. Note that  $\{a,c,e\}$ is not recursively
  stopped because $\{e\}$ is not maximal
  in the stopping prefix of $\mathcal{E}_N^{\{a,c\}}$ (see Fig.~A.\ref{fig:sp-3}).
Finally, note that the branching cells of $\mathcal{E}_N^{\{a\}}$ (Fig.~A.\ref{fig:sp-2}) and $\mathcal{E}_N^{\{b\}}$ (Fig.~A.\ref{fig:sp-4}) 
correspond to different choices in $\mathcal{E}_N^\emptyset$ and thus have different stopping prefixes.
%
\end{example}

\subsection{AB's decomposition and terms}

%
The recursively stopped configurations of a marked net $(N,m)$ characterise 
all the allowed executions of $N$ under the marking $m$. Hence, we formally link the recursively stopped configurations of $\mathcal{E}_{(N,m)}$ 
with the deterministic processes associated with $\termsof{N,m}$. We start by introducing the 
notion of configurations associated to a term.

\begin{definition}  Given a term $T:\termtype{i}{}{s}{o}$ and a marking $m\subseteq i$, the set of configurations of 
\ $T$ under $m$, written $\conf{T,m}$, is  defined inductively as follows.
\[\begin{array}{l@{\ =\ }l}
 \conf{I_s,m} & \{\emptyset\} \\
 \conf{\bot_s,\emptyset} & \{\emptyset\} \\
 \conf{T_1\oplus T_2, m} & \{v_1\cup v_2\ |\ \forall j = 1,2.\ T_j:\termtype{i_j}{\emptyset}{s_j}{o_j} \ \\
 \multicolumn{2}{r}{\hfill\wedge\ v_j\in\conf{T_j,m\cap i_j)}\}}\\
 \conf{T_1; T_2, m} & \{ v_1\cup v_2 \ |\ v_1\in\conf{T_1,m}\ \wedge\ T_2:\termtype{i_2}{\emptyset}{s_2}{o_2}\qquad\\
 \multicolumn{2}{r}{\hfill \wedge\ v_2\in\conf{T_2,\maxp{v_1}\cap i_2}\}}\\
  \conf{\bct,\emptyset} & \Theta\\
 \conf{\sum_{m\subseteq i}\ \cop, m_j} & \conf{T_j, \emptyset} 
 \end{array}
\]
\end{definition}
%
%


\begin{proposition} Let $(N,m) : i \to o$ be a finite marked occurrence net and $T = \termsof{N,m}$. Then, 
for $j\subseteq i$,
 $v$ is a maximal r-stopped configuration of  $\mathcal{E}_{(N, m\cup j)}$ iff  $v\in\conf{T,j}$.
\end{proposition}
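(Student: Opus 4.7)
I would prove the statement by structural induction on $T = \termsof{N,m}$, following the recursive definition of $\termsof{-,-}$: first unfold $N$ into its canonical form (by Proposition~\ref{prop:decomposition}) as a series/parallel composition of s-cells and identities, and then, for each s-cell with unmarked initial places, apply the hierarchical case analysis of the sum constructor. The invariant carried through the induction is exactly the proposition's statement, namely that for any $j \subseteq i$, $\conf{T,j}$ enumerates exactly the maximal recursively stopped (r-stopped) configurations of $\mathcal{E}_{(N,m\cup j)}$.

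\textbf{Base cases and the sum.} For $T=I_s$ and $T=\bot_s$, the net has no transitions, so $\mathcal{E}_{(N,m\cup j)}$ is empty and the unique maximal r-stopped configuration is $\emptyset$, matching $\conf{T,j}=\{\emptyset\}$. For $T=\bct$, the net $N_\bc$ is a single s-cell with all initial places already in $m$ (so $j=\emptyset$); I would invoke the appendix correspondence between s-cells and AB branching cells to conclude that $N_\bc$ constitutes a single initial stopping prefix whose maximal r-stopped configurations coincide with $\Theta(N_\bc)=\conf{\bct,\emptyset}$. For the hierarchical case $T=\sum_{m\subseteq i}\cop$, given an input marking $m_j$, the places in $i\setminus m_j$ remain dead in every computation, hence contribute no events; therefore $\mathcal{E}_{(N_\bc,\;(\text{orig.})\cup m_j)}$ coincides with $\mathcal{E}_{N_{m_j}}$ where $N_{m_j}=N_\bc@(i\cup m_j)$. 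The IH applied to $T_{m_j} = \termsof{\mathsf{can}(N_{m_j})}$ yields exactly $\conf{T_{m_j},\emptyset}=\conf{\sum_{m\subseteq i}\cop,m_j}$, the $\bot_{d_{m_j}}$ factor contributing only $\emptyset$.

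\textbf{Compositional cases.} For $T=T_1\oplus T_2$: since $N_1$ and $N_2$ share no places or transitions, $\mathcal{E}_{(N_1\oplus N_2,m\cup j)}$ is the disjoint union of the two component event structures (setting $m_k=m\cap i_k$, $j_k=j\cap i_k$). Immediate conflict, initial stopping prefixes, and the recursive-stopping condition all distribute over disjoint union, so maximal r-stopped configurations decompose uniquely as $v_1\cup v_2$; applying the IH to each factor closes the case. For $T=T_1;T_2$: here the nets share exactly the places $\maxp{\mathcal{M}_1}=\minp{\mathcal{M}_2}$, and the key observation is that every event of $N_2$ either causally depends on a shared place (hence on some $N_1$-transaction) or lies above an unmarked initial place of $N_2$ provided by $T_1$'s output. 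Consequently (a) every initial stopping prefix of $\mathcal{E}_{(N_1;N_2,m\cup j)}$ is contained in the events of $N_1$, and (b) for any r-stopped $v_1$ of $T_1$ under $j$, the future $\mathcal{E}^{v_1}$ is isomorphic to $\mathcal{E}_{(N_2,m_2\cup(\maxp{v_1}\cap i_2))}$. Granting (a)--(b), a maximal r-stopped $v$ decomposes uniquely as $v=v_1\cup v_2$ with $v_2$ maximal r-stopped in the future, and the IH on $T_1$ and $T_2$ delivers membership in $\conf{T_1;T_2,j}$ (and conversely).

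\textbf{Main obstacle.} The delicate point is the sequential case, specifically claim~(b): the AB procedure recomputes branching cells from the \emph{future} $\mathcal{E}^{v_1}$, so one must verify that the choices already taken inside $v_1$ neither create nor remove immediate conflicts in the $T_2$-region. This is where the alignment between the static preorder $\sqsubseteq$ used to define s-cells and the dynamic AB decomposition earns its keep: because $T_1$ and $T_2$ interface only through the shared places $\maxp{\mathcal{M}_1}=\minp{\mathcal{M}_2}$ and have no shared transitions, conflict inheritance and immediate conflict in the future depend only on which tokens $v_1$ has produced in $\maxp{v_1}\cap i_2$, which is precisely the information the sum constructor in $T_2$ (at its topmost s-cells with unmarked inputs) branches on. Making this rigorous amounts to a careful bookkeeping argument relating $N_\bc@m$ (static dead-place removal) to $\mathcal{E}^v$ (dynamic future), and this is where I would expect the bulk of the technical work to sit.
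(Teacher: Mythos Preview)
Your approach is correct and essentially identical to the paper's: both argue by structural induction on $T$, handling the same six cases in the same way. If anything you supply more detail than the paper, which dispatches the sequential case in two lines (simply asserting that $v_1$ is r-stopped in $\mathcal{E}_{(N,m\cup j)}$ and $v_2$ in its future $\mathcal{E}_{(N,m\cup j)}^{v_1}$) and treats $T=\bot_s$ as vacuous since no net $(N,m)$ satisfies $\termsof{N,m}=\bot_s$.
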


\begin{proof} The proof follows by structural induction on $T$. 
\begin{itemize}

\item $T = I_s$.  For all $j\subseteq i$, we have  $\conf{I_s,j} = \{\emptyset\}$.  Consequently, $v\in\conf{I_s,j}$ implies  $v = \emptyset$.
Since $\termsof{N,m} = I_s$, $(N, m) = I_s$. Then, $s = i$ and $m = \emptyset$. Therefore, $\mathcal{E}_{(N, m\cup j)} =\emptyset$.
Consequently, $v \in \mathcal{E}_{(N, m\cup j)}$ implies $v = \emptyset$.

\item $T = \bot_s$. It holds trivially because there is no $(N,m)$ such that $\termsof{N,m} = \bot_s$. 

\item $T = T_1\oplus T_2$. Then, $(N,m) = (N_1,m_1)\oplus (N_2,m_2)$, $T_1 = \termsof{N_1,m_1}$
$T_2 = \termsof{N_2,m_2}$. By inductive hypothesis, 
$v_i \in \conf{T_i,j_i}$ iff $v_i$ is an r-stopped configuration of $\mathcal{E}_{(N_i,m_i\cup j_i)}$.
The proof follows by noting that the union of two disjoint r-stopped configurations is an r-stopped configuration.

\item $T = T_1; T_2$. Then, $(N,m) = (N_1,m_1); (N_2,m_2)$, $T_1 = \termsof{N_1,m_1}$
$T_2 = \termsof{N_2,m_2}$. By inductive hypothesis, 
$v_i \in \conf{T_i,j_i}$ iff $v_i$ is an r-stopped configuration of $\mathcal{E}_{(N_i,m_i\cup j_i)}$.
The proof follows by noting that $v_1$ is an r-stopped configuration of 
$\mathcal{E}_{(N,m\cup j)}$ and $v_2$ is an r-stopped configuration of
$\mathcal{E}_{(N,m\cup j)}^{v_1}$. Consequently, 
$v=v_1\cup v_2$ is an r-stopped configuration of $\mathcal{E}_{(N,m\cup j)}$.

\item $T = \bct[m][o][\Theta(N_{\bc})]$. Then, $N = N_\bc$
 and $m = \minp\bc$. 
Moreover, $v\in\mathcal{E}_{(\bc,\minp\bc)}$ implies that $v$ is a maximal deterministic process 
of $(\bc,\minp\bc)$, i.e., a transaction. Hence, $v\in\Theta(N_{\bc})$ and $v\in\conf{T,\emptyset}$.

\item $T = \sum_{j\subseteq i}\ \cop[j][\bot_{d_{j}} \oplus T_{j}]$ with $T_j = \termsof{\mathsf{can}(N_{\bc}@m\cup j)}$.
Then, $v \in \conf{T,j}$ iff $v \in \conf{T_j,\emptyset}$.
 By inductive hypothesis, $v$ is a maximal r-stopped configuration of 
$\mathcal{E}_{N_{\bc}@m\cup j}$. The proof is completed by noting 
that $\mathcal{E}_{N_{\bc}@m\cup j} = \mathcal{E}_{(N_{\bc},m\cup j)}$.
\end{itemize}
\end{proof}

\end{document}